\documentclass[review=false]{jfp-epi}

\setcopyright{none}
\usepackage{amsmath}
\usepackage{amssymb}
\usepackage{stmaryrd}

\usepackage{accents}

\usepackage{mathpartir}
\usepackage{mathtools}

\usepackage{enumerate}
\usepackage[inline]{enumitem}
\setlist[itemize]{noitemsep, topsep=6pt}
\newlist{inparenumA}{enumerate*}{1}
\setlist[inparenumA]{label=(\Alph*)}
\newlist{enumN}{enumerate}{1}
\setlist[enumN]{label=(\arabic*), noitemsep, topsep=6pt}
\newlist{enumNsep}{enumerate}{1}
\setlist[enumNsep]{label=(\arabic*), topsep=6pt}
\newlist{enumA}{enumerate}{1}
\setlist[enumA]{label=(\Alph*), noitemsep, topsep=6pt}
\newlist{enuma}{enumerate}{1}
\setlist[enuma]{label=(\alph*), noitemsep, topsep=6pt}

\usepackage{mathpartir}
\usepackage{fancyvrb}

\usepackage{array}  

\usepackage{xcolor}
\usepackage{graphicx}
\usepackage{tcolorbox}

\usepackage{mathtools}

\newcommand{\mytagform}[1]{\textsc{#1}}
\newtagform{labeq}[\mytagform]{(}{)}
\usetagform{labeq}

\newcommand{\nt}[1]{{\normalfont\textit{#1}}}
\newcommand{\OR}{\ \ |\ \ }
\newcommand\set[1]{\ensuremath{\{#1\}}}
\newcommand\rng[1]{\ensuremath{\textrm{rng}(#1)}}

\newcommand\IN{&\in&}
\newcommand\IS{&::=&}

\newcommand\Con{\textit{Con}}

\newcommand\Var{\textit{Var}}
\newcommand\EVar{\textit{EVar}}
\newcommand\IVar{\textit{IVar}}
\newcommand\BinOp{\textit{BinOp}}

\newcommand\Pat{\textit{Pat}}
\newcommand\EPat{\textit{EPat}}
\newcommand\Exp{\textit{Exp}}

\newcommand\Seg{\textit{Seg}}

\newcommand\EExp{\textit{EExp}}
\newcommand\BExp{\textit{BExp}}

\newcommand\Subst{\textit{Subst}}

\newcommand\Core{\nt{LC}}
\newcommand\cexp{\ensuremath{\ell}}
\newcommand\gplus{\ensuremath{\oplus}}
\newcommand\subs[1]{[b/a]#1}

\newcommand\op{\ensuremath{\bullet}}
\newcommand{\expr}{\nt{e}}
\newcommand{\ellibar}{\raisebox{-1pt}{\kern1pt$\scriptscriptstyle\ldots$}}
\newcommand{\ent}[1]{\ensuremath{\overset\ellibar{#1}}}
\newcommand{\eexpr}{\ent{\nt{e}}}
\newcommand{\eexp}{\ent{\nt{e}}}
\newcommand{\bexp}{\ensuremath{\bar{e}}}
\newcommand{\bexpa}{\ensuremath{\bexp_l}}
\newcommand{\bexpb}{\ensuremath{\bexp_r}}

\newcommand{\evar}[1][x]{\ent{\nt{#1}}}
\newcommand{\ivar}[1][\iota]{\ent{#1}}
\newcommand{\epat}{\ent{\nt{p}}}

\newcommand{\seg}{\nt{s}}
\newcommand\metadots{\cdots}
\newcommand{\sequ}[2]{\ensuremath{#1#2\metadots#2#1}}

\newcommand{\sequx}[3]{\ensuremath{#1#2\metadots#2#3}}

\newcommand{\win}[1]{#1}

\newcommand{\abs}[2]{\ensuremath{\BS#1\prog{.}#2}}
\newcommand{\xlet}[3]{\ensuremath{\prog{let}\,#1\prog{=}#2\,\prog{in}\,#3}}
\newcommand{\xrule}[2]{\ensuremath{#1\,\prog{->}\,#2}}
\newcommand{\case}[2]{\ensuremath{\prog{case}\,#1\,\prog{of}\,\BL#2\BR}}

\newcommand\pI{\prog{1}}

\newcommand\e[1]{\ensuremath{\expr_{#1}}}

\newcommand{\subst}{\ensuremath{\theta}}
\newcommand{\sub}[3]{\ensuremath{[#1/#2]#3}}
\newcommand{\emptyEnv}{\ensuremath{\varnothing}}

\newcommand{\lCase}{\textsc{Case}}

\newcommand\lft[1]{\ensuremath{\dot{#1}}}
\newcommand\rgt[1]{\ensuremath{\ddot{#1}}}
\renewcommand\lft[1]{\ensuremath{#1_1}}
\renewcommand\rgt[1]{\ensuremath{#1_2}}

\newcommand\eL{\lft{e}}
\newcommand\eR{\rgt{e}}
\newcommand\beL{\lft{\bexp}}
\newcommand\beR{\rgt{\bexp}}

\newcommand\xL{\lft{x}}
\newcommand\xR{\rgt{x}}

\newcommand\gen[5][\subst]{\ensuremath{#1\vdash#2|#3\Rightarrow#4,#5}}

\def\trSegSym{{%
    \setbox0\hbox{\ensuremath{\,\to\,}}%
    \rlap{\hbox to \wd0{\hss\raisebox{3pt}{\ensuremath{
    \scriptstyle\cdots}}\,\hss}}\box0
}}

\def\trCaseRuleSym{{%
    \setbox0\hbox{\ensuremath{\,\longrightarrow\,}}%
    \rlap{\hbox to \wd0{\hss\raisebox{3pt}{\ensuremath{
    \scriptstyle\to}}\,\hss}}\box0
}}

\newcommand{\trE}[2]{\ensuremath{#1\to#2}}
\newcommand{\trSeg}[2]{\ensuremath{#1\trSegSym#2}}
\newcommand{\trCaseRule}[2]{\ensuremath{#1\trCaseRuleSym#2}}
\newcommand{\sliceOrRange}[2]{\ensuremath{\|#1,#2\|}}

\newcommand\Lft{\textit{left}}
\newcommand\Mdl{\textit{middle}}
\newcommand\Rgt{\textit{right}}
\newcommand\Loc{\textit{Loc}}
\newcommand\sem[1]{\ensuremath{\llbracket#1\rrbracket}}
\newcommand\rangeName{\textit{isRange}}
\newcommand{\isRange}[2]{\rangeName(#1,#2)}

\newcommand\cmp[2]{\ensuremath{#1{\scriptstyle\gtreqless}#2}}
\newcommand\traversalName{\textit{Traversal}}
\newcommand\traversal[2]{\ensuremath{\traversalName(#1,#2)}}

\usepackage{fancyvrb}

\definecolor{myblue}{rgb}{0.1, 0.1, 0.55}

\usepackage{relsize}
\newcommand{\progfontsize}{\relsize{-0.5}} 
\newcommand{\progcolor}{myblue}
\newcommand{\progcolorD}{myblue}
\newcommand{\progind}{\parindent}
\newcommand{\prog}[1]{\textrm{\progfontsize\texttt{\color{\progcolor}#1}}}
\def\progColumn{\progfontsize\ttfamily\color{\progcolor}}

\newlength{\mylen}
\newcommand\fwdsp{\kern2.2pt}
\newcommand\bcksp{\kern-1.9pt}
\def\edot{\makebox[0pt][c]{\prog{.}}}
\newcommand\E{\makebox[2\mylen][c]{\edot\hfil\edot\hfil\edot}}

\newcommand\BS{\prog{\symbol{92}}}  
\newcommand\BL{\prog{\symbol{123}}}  
\newcommand\BR{\prog{\symbol{125}}}  

\newcommand\V{\raisebox{-2pt}{\rotatebox{90}{\E}}}

\makeatletter
\let\ttfamily\relax 
\DeclareRobustCommand\ttfamily
        {\not@math@alphabet\ttfamily\mathtt
         \fontfamily\ttdefault\selectfont\frenchspacing}
\makeatother

\fvset{listparameters=\setlength{\topsep}{4pt}\setlength{\partopsep}{4pt}}

\DefineVerbatimEnvironment{program}{Verbatim}
  {baselinestretch=1.0,xleftmargin=\progind,fontsize=\progfontsize,
   formatcom=\color{\progcolorD},
   samepage=true}

\DefineVerbatimEnvironment{programB}{BVerbatim}
  {baselinestretch=1.0,xleftmargin=\progind,fontsize=\progfontsize,
   formatcom=\color{\progcolorD},
   samepage=true,baseline=t,
   commandchars=\\\{\}}

\DefineVerbatimEnvironment{programC}{BVerbatim}
  {baselinestretch=1.0,fontsize=\progfontsize,
   formatcom=\color{\progcolorD},
   samepage=true,baseline=t,
   commandchars=\\\{\}}

\DefineVerbatimEnvironment{programCmd}{Verbatim}
  {baselinestretch=1.0,xleftmargin=\progind,fontsize=\progfontsize,
   formatcom=\color{\progcolorD},
   samepage=true,
   commandchars=\\\{\}}

\def\colsepFill{\hfill\vline\hfill}

\newenvironment{derivation}
  {\par\vskip\abovedisplayskip 
   \begin{tabular}{@{}c@{\ \ }>{\progColumn}l@{\qquad}l}}
  {\end{tabular}\vskip\belowdisplayskip\ignorespacesafterend}

\usepackage{noindentafter}
\NoIndentAfterEnv{equation}
\NoIndentAfterEnv{align}
\NoIndentAfterEnv{derivation}
\NoIndentAfterEnv{program}
\NoIndentAfterEnv{programCmd}
\NoIndentAfterEnv{enuma}

\begin{document}

\title{Programming with Ellipses}

\author{Martin Erwig}
\orcid{0000-0002-7471-4554}
\affiliation{%
  \institution{Oregon State University}
  \country{USA}
  \authoremail{erwig@oregonstate.edu}}
\author{Hangil Kim}
\orcid{0009-0004-2357-9016}
\affiliation{%
  \institution{Oregon State University}
  \country{USA}
  \authoremail{kimhang@oregonstate.edu}
}

\begin{abstract}
We present the design of a language extension that allows the use of ellipses (``...'') in patterns and expressions, which facilitates function definitions on lists that are more succinct and direct than the standard recursive ones. 
The semantics of ellipsis notation is defined via a program translation that is based on computing least general generalizations of the expressions on the boundary of ellipses.
We show that ellipsis notation applies to a wide spectrum of function definitions. 
Ellipsis notation can also support the teaching of functional programming, and its inherently iterative nature may make it especially helpful to those with an imperative programming background.
%
Like list comprehensions, ellipsis notation provides an attractive tool for functional languages that adds to their syntactic variety and enriches their expressiveness.
\end{abstract}

\maketitle

\section{Introduction}


Lists are probably the most widely used data structures in functional programming. 
It is therefore no surprise that functional languages offer special support for programming with lists. In addition to providing comprehensive libraries, such support can be observed in the language design, especially in its syntax. 

Many functional languages, for instance, offer special syntax for list constants.
%
In Haskell \cite{Haskell98}, one such syntax is a simple form of ellipsis notation.
For example, the expression \prog{[a..b]} constructs the list of elements between \prog{a} and \prog{b}, which works for elements of any \prog{Enum} type. 
One can even specify a different offset, as in \prog{[1,3..9]}, which yields the list of odd digits.
In addition, Haskell offers \emph{list comprehensions} \cite{Wad85} as a special syntax for constructing and filtering lists. List comprehensions are a simple, yet effective language extension whose popularity has led to their adoption by other programming languages.
The success of the list comprehension notation is in part due to its origin in elementary set theory, where it is known as \emph{set comprehension} or \emph{set-builder} notation \cite{DiscreteMath}. 
The close relationship between list and set comprehensions means that programmers with a background in basic discrete math tend to find the notation familiar and immediately accessible.

As shown in the Haskell report \cite{Haskell98}, list comprehensions are ultimately syntactic sugar for nested applications of \prog{concatMap} plus conditionals, but can often be rewritten more simply using \prog{map} and \prog{filter}.
Arguably, list comprehension syntax helps in writing simpler Haskell programs.
Despite their seemingly limited contribution, list comprehensions are quite popular and must be considered a successful language extension.
A similar story can be told about \emph{do notation}, which was introduced to Haskell to simplify cascades of monadic bind operations and later adopted by several other programming languages.


Encouraged by the success of list comprehensions, we propose the use of \emph{ellipses} (three consecutive periods ``...'') for programming with lists. Ellipses are an even more widespread and simpler notation than set comprehensions, and thus promise also to be accessible by a large audience. 
As a simple example, consider the function \prog{map}, whose well-known recursive definition is as follows.

\begin{program}
map f []     = []
map f (x:xs) = f x:map f xs
\end{program}
This definition is as clear as it gets, but it requires the use of recursion and two equations to distinguish the base case and the inductive case. 
Now, consider a definition of \prog{map} that uses ellipses instead.
\label{def:map}

\begin{programCmd}
map f [x1, \E, xn] = [f x1, \E, f xn]
\end{programCmd}

This definition expresses the effect of applying \prog{f} to all elements of the list more directly than the recursive one, since it contains only applications of \prog{f} to list elements. By contrast, the recursive definition additionally applies \prog{map} to \prog{f} and \prog{map f} to \prog{xs}, and to understand what that means, one has to unfold the definition of \prog{map}.
The definition with ellipses owes its conciseness to the fact that the \textit{ellipsis expression} defining the result on the RHS can refer to the elements laid bare by the \textit{ellipsis pattern} on the LHS.
The notation has actually been used as an explanation for the recursive definition of \prog{map} \cite[p.\ 91]{Rea89}.

A slightly different form of ellipsis expression can describe the aggregation of list elements with a binary operation.
As an example, consider the mathematical definition of the factorial function $n!=1\times\ldots\times n$. 
%
Expressing this definition in Haskell looks as follows.

\begin{programCmd}
fac n = 1 * \E * n
\end{programCmd}

Adding ellipsis notation to a programming language can make programming with lists more effective by providing the following two benefits.

First, ellipses can help express programs in a simpler way.
At the same time, ellipses are widely known and can thus be expected to be well received and accepted, promising a huge potential for supporting and simplifying the programming with lists.

Second, ellipses can aid novice programmers in learning functional programming with lists. 
In particular, programmers who learned how to program with an imperative programming language often have difficulties understanding and using recursion \citep{McCauley2015}. 
Ellipses are inherently iterative and thus make programmers with an imperative programming mindset feel at home when starting to program in a functional language.

\subsection*{This Paper}

In Section \ref{sec:progelli} we will discuss the different forms of ellipsis patterns and expressions and how they can be employed to program with lists. We present additional opportunities for the applications of ellipsis notation in the derivation of recursive function definitions and program traces in Section \ref{sec:otherapplic}.
In Section \ref{sec:synsem} we define the syntax and semantics of (basic) ellipsis expressions. We then illustrate in Section \ref{sec:splitpatterns} how we can extend the expressiveness of the notation via a simple extension of ellipsis patterns called \emph{split patterns}.
%
%
We evaluate the potential impact of ellipsis notation in Section \ref{sec:eval} by assessing how many definitions in a representative collection of list functions can be simplified through its use, and we compare its scope with that of list comprehensions.
Finally, we discuss related work in Section \ref{sec:relaw} and present our conclusions in Section \ref{sec:concl}. 
Overall, our paper makes the following contributions.

\begin{itemize}
\item \emph{Language extension.} We define the syntax and semantics of a basic and extended version of ellipsis notation to support the programming with lists.

\item \emph{Potential teaching support.} We illustrate how ellipses can be used in the teaching of functional programming by (a) providing a basis for succinct, general program traces and (b) supporting the derivation of recursive function definitions from iterative ones.
%

\item \emph{Evaluation.} We evaluate the scope and expressiveness of ellipsis notation by measuring the impact it can have on simplifying existing function definitions, and we compare its scope with that of list comprehensions.

\end{itemize}

\section{Programming with Ellipses}
\label{sec:progelli}

The two examples in the introduction already illustrate the basic idea behind programming with ellipses. 
Specifically, the definition of \prog{map} illustrates how an intended transformation of a list works through an interplay of an \emph{ellipsis pattern} \prog{[x1, \E, xn]} that provides bindings for the elements of a list and an \emph{ellipsis expression} \prog{[f x1, \E, f xn]} that uses these bindings to describe the resulting list.
But, there remain several additional aspects of the notation's support for programming that we will discuss next.
In the following discussion we refer to the function definitions shown in Figure \ref{fig:elliex}.

\begin{figure}
\hfill
\begin{programC}
reverse [x1, \E, xn] = [xn, \E, x1]

head [x1, \E, xn] = x1
last [x1, \E, xn] = xn
tail [x1, \E, xn] = [x2, \E, xn]
init [x1, \E, xn] = [x1, \E, x\BL{}n-1\BR]

length    [x1, \E, xn] = n
enumerate [x1, \E, xn] = [(1,x1), \E, (n,xn)]

foldl  f u [x1, \E, xn] = u `f` x1 `f` \E `f` xn
foldl1 f   [x1, \E, xn] = x1 `f` \E `f` xn

append [x1, \E, xn] [y1, \E, ym] = [x1, \E, xn, y1, \E, ym]
zip    [x1, \E, xn] [y1, \E, ym] = [(x1,y1), \E, (xn,ym)]
\end{programC}
\hfill{\ }
\caption{Some basic ellipsis programming examples.}
\label{fig:elliex}
\end{figure}

First, the numbering that is part of the variable names representing list elements in an ellipsis pattern is meaningful and can be exploited in several ways. 

\begin{itemize}
\item \emph{Restructuring lists:}
%
The numbers \prog{1} and \prog{n} in the LHS pattern establishes an ordering of the list elements that can be exploited and manipulated on the RHS, as illustrated by \prog{reverse}. \citet[p.\ 33]{FH88} actually use the definition with ellipses as an explanation for the recursive definition of that function.

\item \emph{Extracting parts of lists:}
The numbering of list elements enables direct access to individual elements at (or close to) the beginning and end of a list, facilitating element selection or manipulation, as illustrated by the functions \prog{head}, \prog{last}, \prog{tail}, and \prog{init}.

\item \emph{Deriving list information:}
The numbering information itself can be extracted and combined with other data, as in
\prog{length} and \prog{enumerate}.
\end{itemize}
Second, the definition of \prog{fac} illustrates an instance of a more general \textit{fold expression} that allows the definition of \prog{foldl} (and \prog{foldl1}) functions.
Several textbooks 
use a similar notation to explain recursive function definitions, for example, 
\cite[pp.\ 66-67]{BW88}, 
\cite[p.\ 51]{FH88}, 
\cite[pp.\ 67-69]{Hud00}, 
\cite[pp.\ 66-67]{Hut07}, 
\cite[p.\ 5]{Pau96}, 
\cite[p.\ 96]{Rea89},
\cite[p.\ 222]{Tho11}, or
\cite[p.\ 80]{Ull98}.

%
%

Third, we are able to use ellipsis patterns for multiple arguments, which facilitates the definition of functions such as \prog{append} and \prog{zip} (again used as an explanation in 
\cite[p.\ 95]{Rea89}).
The definition of \prog{append} also illustrates that list expressions may contain multiple ellipsis expressions. In general, a list expression has the form \prog{[}$\seg_1$\prog{,}$\cdots$\prog{,}$\seg_n$\prog{]} where each $\seg_i$ (called a \emph{segment}) is either a non-ellipsis expression or an ellipsis expression.



Finally, we can use equations using ellipses in conjunction with other equations. This can be particularly helpful to isolate special cases in function definitions.
Consider, for example, the definition of the function \prog{sum}. It is possible to give a single-line definition using ellipses as follows. 

\begin{programCmd}
sum [x1, \E, xn] = 0 + x1 + \E + xn
\end{programCmd}

However, here the initial ``\prog{0 +}'' term may seem surprising. Yet, the unit value for the sum is required. Without it, the function definition would be partial and cause a runtime error whenever \prog{sum} is applied to the empty list. Since this aspect might not be obvious, the following definition with two equations that separates the special case for the empty list might seem preferable.

\begin{programCmd}
sum []           = 0
sum [x1, \E, xn] = x1 + \E + xn
\end{programCmd}

Separating the case for the unit element for \prog{sum} (and similarly for other list aggregations) helps keep the ellipsis notation focused on what it can do best, namely, relating the list patterns of arguments to those of results. 

The meaning of the shown function definitions should be obvious to anyone who is familiar with ellipsis notation.\footnote{The definition of \prog{zip} is an exception because it raises the question of what the semantics of \prog{(xn,ym)} is. As we will explain in Section \ref{sec:semantics}, the semantics is the same as for the familiar \prog{zip} function.}
This is not surprising and is, in fact, a crucial motivation for the proposed language extension, namely, to leverage familiarity with specific linguistic structures for narrowing the gap between the cognitive intent of (novice) programmers and the artifacts they have to read and write.

The applicability of ellipsis notation extends beyond simple maps and folds and can express, more generally, folds over applications of $k$-ary zips for processing (parts of) multiple lists.
As an example consider the definition of the function \prog{sorted}, which determines whether the elements of a list is sorted by comparing adjacent list elements.

\begin{programCmd}
sorted [x1, \E, xn] = True && x1<=x2 && \E && x\BL{}n-1\BR<=xn
\end{programCmd}

As we will explain in Section \ref{sec:semantics}, this definition will be translated into a traversal over two sections of the list \prog{x} to produce a list of Booleans to be aggregated by a fold.
Specifically, the translation semantics produces the following function definition, which uses \prog{fwd} to create two forward-directed traversals of the list \prog{x} and the standard function \prog{zipWith} for zipping two lists with a binary function.

\begin{program}
sorted = \x.let n = length x in 
            foldl (&&) True (zipWith (\v1 v2.v1<=v2) 
                                     (fwd 1 (n-1) x) (fwd 2 n x))
\end{program}

\section{Ellipsis Notation in a Supporting Role}
\label{sec:otherapplic}

The introduction of ellipses is \emph{not} meant to oversimplify programming by getting novices hooked on a notation and limit the set of programming concepts they are exposed to. In addition to providing a smooth entry point, ellipses can also support the learning of recursion.
In this section we illustrate three ways in which this can work.

\subsection{Explaining Recursive Definitions}
\label{sec:elliexpl}

In addition to simplifying the definitions of functions, ellipsis notation can be leveraged to explain recursive function definitions.
Consider, for example, the following definition of the function \prog{(++)} for appending two lists.

\begin{program}
[]     ++ ys = ys
(x:xs) ++ ys = x:(xs ++ ys)
\end{program}

Programmers with experience in using recursion have no problem understanding this definition. However, to many novice functional programmers (who often have a background in imperative programming) recursion is a difficult concept \citep{McCauley2015}, and it is not obvious how this program works or that it even works at all.


How could we explain the definition?
Such an explanation must appeal to some model of lists that serves as an accepted foundation, and ellipsis notation could serve as one. Specifically, we expect \prog{(++)} to behave as follows.
%
\begin{equation}\tag{Append}\label{eq:append}
\prog{[x1, \E, xn] ++ [y1, \E, ym]} = \prog{[x1, \E, xn, y1, \E, ym]}
\end{equation}

%
Assuming that a programmer understands this notation, we can use it to show that the recursive definition satisfies that specification. We use the following equation (see also \cite[p.\ 148]{BW88} or \cite[p.\ 69]{Pau96}) for expanding ellipsis forms for non-empty lists (that is, $\prog{n}\geq 1$), which links the ellipsis list syntax to the usual cons notation employed in the recursive definition.
\begin{equation}\tag{Cons}\label{eq:cons}
\prog{[x1, \E, xn]} = \prog{[x1, x2, \E, xn]} = \prog{x1:[x2, \E, xn]}
\end{equation}

Then, we can illustrate how \prog{(++)} works for non-empty first arguments.

\begin{derivation}
  & [x1, \E, xn] ++ [y1, \E, ym] \\
= & (x1:[x2, \E, xn]) ++ [y1, \E, ym]  &  \eqref{eq:cons} \\
= & x1:([x2, \E, xn] ++ [y1, \E, ym])  &  (Definition of \prog{(++)}) \\
= & x1:[x2, \E, xn, y1, \E, ym]        &  (By induction) \\
= & [x1, \E, xn, y1, \E, ym]           &  \eqref{eq:cons}
\end{derivation}

Since one step involves induction, this derivation is not trivial for novices, but it is at least an explanation. Without it, the definition of \prog{(++)} can be illustrated through individual example traces.
The shown derivation is actually a summary of all traces from which arbitrary example traces can be derived by substituting concrete lists for the ellipsis expressions (and repeating the application of \eqref{eq:cons} and \prog{(++)} as needed).

As another example we can illustrate that the following recursive definition of \prog{reverse} is correct.

\begin{program}
reverse []     = []
reverse (x:xs) = reverse xs ++ [x]
\end{program}

The correctness criterion for \prog{reverse} can again be expressed using ellipsis notation, that is, by the definition of \prog{reverse} given in Figure \ref{fig:elliex}.
%
%
%
In addition to \eqref{eq:cons}, we need its dual \eqref{eq:snoc} for expanding ellipses at the end of lists (again, assuming $\prog{n}\geq 1$).
\begin{equation}\tag{Snoc}\label{eq:snoc}
\prog{[xn, \E, x1]} = \prog{[xn, \E, x2, x1]}
\end{equation}

We can then derive the effect of \prog{reverse} as follows.

\begin{derivation}
  & reverse [x1, \E, xn] \\
= & reverse (x1:[x2, \E, xn])     &  \eqref{eq:cons} \\
= & reverse [x2, \E, xn] ++ [x1]  &  (Definition of \prog{reverse}) \\
= & [xn, \E, x2] ++ [x1]          &  (By induction) \\
= & [xn, \E, x2, x1]              &  (Definition of \prog{(++)}) \\
= & [xn, \E, x1]                  &  \eqref{eq:snoc} 
\end{derivation}

%
%
%
%
%

%

\subsection{Deriving Recursive Definitions}
\label{sec:deriving}

A related application is the derivation of recursive function definitions from functions that are defined using ellipsis syntax.
As an example we consider again the definition of the function \prog{reverse}.

In the following transformation we add $L$ and $R$ tags to indicate whether the named transformation has been applied to the left or right side of the equation on the preceding line.

\newcommand{\LorR}[1][R]{\ \ensuremath{#1}}

\begin{derivation}
    & reverse [x1, \E, xn]~~~~~~= [xn, \E, x1] 
\\
$\Leftrightarrow$ & reverse (x1:[x2, \E, xn])  = [xn, \E, x1]  
    & \eqref{eq:cons} \LorR[L] \\
$\Leftrightarrow$ & reverse (x1:[x2, \E, xn])  = [xn, \E, x2, x1]  
    & \eqref{eq:snoc} \LorR \\
$\Leftrightarrow$ & reverse (x1:[x2, \E, xn]) = [xn, \E, x2] ++ [x1]  
    & \eqref{eq:append} \LorR \\
$\Leftrightarrow$ & reverse (x1:[x2, \E, xn]) = reverse [x2, \E, xn] ++ [x1]  
    & (Induction) \LorR \\
$\Leftrightarrow$ & reverse (x1:xs)           = reverse xs ++ [x1]  
   & (\E-to-var) \LorR[L,R] \\
\end{derivation}

\noindent
Note that the last step replaces an ellipsis pattern and expression by a variable.

\begin{figure}
\hfill
\begin{programC}
fac 1 = 1
fac n = n * fac (n-1)
\end{programC}
\hfill
\begin{programC}

fac n = n * \E * 1
\end{programC}
{\ \ \ }
\\[2ex]
\begin{programC}
  fac 6
= 6 * fac 5
= 6 * 5 * fac 4
  \fwdsp\V
= 6 * 5 * 4 * 3 * 2 * fac 1
= 6 * 5 * 4 * 3 * 2 * 1
= 720
\end{programC}
\colsepFill
\begin{programC}
  fac 6
= 6 * fac 5
= 6 * 5 * fac 4
  \fwdsp\V
= 6 * 5 * \E * fac 1
= 6 * 5 * \E * 1
= 720
\end{programC}
\colsepFill
\begin{programC}
  fac 6
= 6 * \E * 1
  \fwdsp\V
= 6 * 5 * \E * 1
= 720
\end{programC}
\hfill{}
\caption{Partial traces for computing \prog{fac 6}.
Left: A vertical ellipsis\ \ \V\ \ to omit steps from the derivation.
%
Middle: Language ellipses additionally simplify expressions.
%
Right: A trace generated by the non-recursive ellipsis-based definition. The use of different ellipsis symbols (horizontal vs.\ vertical) reflects the different roles of the ellipses.
}
\label{fig:trace}
\end{figure}

\subsection{Generating Concise Traces With Ellipses}
\label{sec:ellitrace}

Ellipses are also often used in abbreviating traces that explain the evaluation of functional programs \cite{Bir98,Tho11} where the ellipsis symbol is meant to denotes a sequence of omitted steps.

As an example, consider the recursive definition of the factorial function at the top of Figure \ref{fig:trace} and its ellipsis-based counterpart next to it.
Beneath the function definitions we show three versions of a trace for the computation of \prog{fac 6}.
The use of ellipses simplifies traces in two principal ways. First, we can replace intermediate steps that are repetitions of earlier steps just with different arguments by a vertical ellipsis (left trace).
Second, having ellipses available as lexical tokens allows the further simplification of such traces by eliding the middle parts of cascading applications of infix operations (middle trace).
Finally, by appealing to the ellipsis-based definition for \prog{fac} we can obtain an even shorter trace, since we avoid having to expand recursive function calls (right trace).


\section{Syntax and Semantics of Ellipsis Notation}
\label{sec:synsem}

In short, ellipsis notation provides a succinct syntax for defining functions that are potential folds over a \prog{zipWith} of $k$ lists, some of which may be reversed and stripped of prefixes and suffixes. 
In the following we define the translation semantics of ellipsis notation in three steps.
First, we describe the syntax of extending a small functional language by ellipsis patterns and expressions in Section \ref{sec:syntax}.
In Section \ref{sec:antiunif} we then explain the inference of functions that interpolate between the two boundary expressions of an ellipsis expression. These functions play a key role as the parameters for the \prog{zipWith} functions in the generation of the values denoted by an ellipses. 
The translation semantics described in Section \ref{sec:semantics} is based on this inference.

\subsection{Syntax}
\label{sec:syntax}

Consider the syntax of a typical core functional language, shown in Figure \ref{fig:syncore}. Ellipsis notation extends it by ellipsis patterns (\epat) and ellipsis expressions (\eexpr) as shown in Figure \ref{fig:syntax}. In the syntax definitions we make use of the following notation for extending syntactic domains. Given a definition such as $a \in A\ ::=\ R_1\ |\ \cdots\ |\ R_n$, we use the notation $b \in B ::= A\ \gplus\ S_1\ | \cdots\ |\ S_k$ to denote the definition $b \in B\ ::=\ \subs{R_1}\ |\ \cdots\ |\ \subs{R_n}\ |\ S_1\ |\ \cdots\ |\ S_k$ (where $\subs{R}$ denotes the substitution of all occurrences of metavariable $a$ in $R$ by metavariable $b$).

\begin{figure}[t]
\(
\begin{array}{r@{\ }c@{\ }l@{\ \ }c@{\ \ }l}
\multicolumn{5}{l}{
c \in \Con \qquad 
x \in \Var \qquad 
p \in \Pat\ ::=\ c\,\sequ{p}{} \OR x \OR \epat
}
\\
\cexp \IN \Core \IS c \OR x \OR \abs{x}{\cexp} \OR \cexp\;\cexp
\\
e \IN \Exp \IS \Core\ \gplus\ \xlet{x}{e}{e} \OR
               \case{e}{\sequ{\xrule{p}{e}}{\prog{;}}}
               \OR \eexpr 
\end{array}
\)
\caption{Syntax of the core functional language with anticipated extensions by ellipsis patterns (\epat) and ellipsis expressions (\eexpr).
}
\label{fig:syncore}
\end{figure}

Ellipsis patterns are written like list constants with an initial and final variable, separated by commas from an ellipsis symbol \E\ in the middle. 
The variables in an ellipsis pattern consist of two parts, a name \evar\ plus an index, which can be given either by the constant \prog{1} or an index variable \ivar.  
The two separate syntactic categories for ellipsis and index variables allows us to combine them into one syntactic entity by simple concatenation, avoiding the need for curly braces for separating single index variables.
In this paper we use \prog{x}, \prog{y}, and \prog{z} for \evar\ and \prog{k}, \prog{n}, and \prog{m} for \ivar, but in general, one can envision some form of ``ellipsis variable'' declaration (similar to infix declarations) that gives programmers control over the parsing process with the ability to choose the variable names to be used in ellipsis patterns and expressions.
However, we do stipulate that the names of both occurrences of \evar\ in an ellipsis pattern be identical, which means that patterns such as \prog{[x1,\E,yn]} are syntactically incorrect.

An ellipsis expression can take on one of three forms.
(1) The most basic form is given by two expressions connected by an ellipsis symbol, a so-called \emph{span}. Typical examples are the RHSs of the function definitions for \prog{reverse}, \prog{init}, or \prog{map}. 
But note that only a subset of expressions $\bexp\in\BExp$ (so-called \emph{boundary expressions}) can be used here; \BExp\ specifically excludes let and case expressions as well as references to ellipsis variables. 
%
%
As a slight generalization, a list expression consists of one or more so-called \emph{segments}, which can be ordinary expressions or spans. The RHS of \prog{append} shows an example that contains two spans.

(2) As a generalization of spans we also have \emph{fold expressions}, which insert a binary operator $\op\in\BinOp\subseteq\Con$ between the boundary expressions. A fold expression may also contain an additional expression and operator on the left side to define the value of the fold for an empty list.
The definitions of \prog{foldl}, \prog{sum}, and \prog{fac} serve as examples.

(3) The final form of an ellipsis expression is a reference to an ellipsis variable, as illustrated in the definitions of \prog{head} and \prog{last}. Note that the definition of \prog{length} does not fall under this case, since the RHS is given by a plain variable whose value is determined through the semantics of the core language. By contrast, ellipsis variable references such as \prog{x1} or \prog{xn} require a translation into an indexing operation on a list.

\begin{figure}[t]
\(
\begin{array}{r@{\ }c@{\ }l@{\ \ }c@{\ \ }l}
\epat \IN \EPat \IS \prog{[}\evar\pI\prog{,\E,}\evar\ivar\prog{]}
\qquad\evar,\evar[y]\in\EVar\qquad\ivar,\ivar[\kappa]\in\IVar \\
\eexp \IN \EExp \IS \prog{[}\sequ{\seg}{\prog{,}}\prog{]} 
        \OR   [e\ \op]\ \bexp\ \op\ \E\ \op\ \bexp
        \OR   \evar\BL\bexp\BR \\
\seg \IN \Seg \IS e \OR \bexp\prog{,\E,}\bexp \\
\bexp \IN \BExp \IS \Core\ \gplus\ \evar\BL\cexp\BR \\
\end{array}
\)
\caption{Syntax of ellipsis patterns and expressions. Note the difference between the ellipsis token ``\E'', which is part of the concrete syntax, and the ellipsis ``$\metadots$'', which is part of the meta-notation. 
An ellipsis pattern is written as a list constant with two numbered variables and an ellipsis symbol in the middle. 
The most basic form of an ellipsis expression is a \emph{span} $\bexp\prog{,\E,}\bexp$, which consist of two boundary expressions connected by commas and the ellipsis symbol.
In general an ellipsis expression is given by (1) a sequence of list segments \seg, 
each of which can either be an expression \expr\ or a span, (2) a fold expression, which is like a span with a binary operator \op\ between the boundary expressions (possibly extended by an expression defining the case for empty lists), or (3) a reference to an ellipsis variable $\evar\BL\bexp\BR$.
}
\label{fig:syntax}
\end{figure}

\subsection{Computing Least General Generalizations Through Anti-Unification}
\label{sec:antiunif}

Consider the ellipsis expression \prog{[f x1, \E, f xn]} (the RHS of \prog{map}).
The expression denotes a list of values obtained by applying a function \prog{f} to each of the elements of the list \prog{x}. What all the elements in the resulting list have in common is the application of \prog{f}, and they (generally) differ in the argument for \prog{f}.
The list can be computed by mapping the function \prog{\BS{}v.f v} over \prog{x}.

Now, consider the RHS of the following definition.

\begin{programCmd}
diffs [x1, \E, xn] = [x1-x2, \E, x\BL{}n-1\BR-xn]
\end{programCmd}

Here, the resulting values are obtained by subtracting from each element of \prog{x} the following element in the list. What all the elements in the result have in common is the application of \prog{(-)}, but they (generally) differ in both arguments.
This list can be computed by zipping \prog{init x} and \prog{tail x} with the function \prog{\BS{}v1 v2.v1-v2}.

Since \prog{map} is \prog{zipWith}$_k$ for $k=1$, the two examples indicate that we generally zip $k$ given (sub)lists with a function that keeps the common parts of the span's two boundary expressions and uses parameters for those parts that vary with the elements of the list(s). 
In particular, the number $k$ of lists equals the number of parameters of the function.

In the \prog{map} example the boundary expressions are $e_1=\prog{f x1}$ and $e_2=\prog{f xn}$, the common part of \e1 and \e2 is \prog{f}, and the difference between \e1 and \e2 is \prog{x1} and \prog{xn}, which is captured by the function \prog{\BS{}v.f v}.
In the \prog{diffs} example the boundary expressions are $e_1=\prog{x1-x2}$ and $e_2=\prog{x\BL{}n-1\BR-xn}$, the common part of \e1 and \e2 is \prog{-}, and the differences between \e1 and \e2 are \prog{x1}, \prog{x2} and \prog{xn}, \prog{x\BL{}n-1\BR}, captured by the function \prog{\BS{}v1 v2.v1-v2}.

In general, we therefore need to compute a function that retains the common parts of two expressions $e_1$ and $e_2$ and represents the differences between them as parameters.
This task can be accomplished by so-called \emph{anti-unification} \cite{AntiUnificationP,AntiUnificationR}, which formally computes the \emph{least general generalization} (LGG) $e$ of two expressions \e1 and \e2. 
The least general generalization captures the minimal differences between two expressions and provides a function that interprets the difference as an interpolation and transition between them.
The differences between \e1 and \e2 are represented in $e$ by variables along with a pair of substitutions $\subst_1$ and $\subst_2$ such that $\subst_1(\expr)=\e1$ and $\subst_2(\expr)=\e2$.
Note that we only need to define anti-unification over boundary expressions, since these are the ones marking the boundaries of spans and fold expressions.
Since the two substitutions have the same domain, they can be succinctly represented by a single mapping $\Var\to\BExp\times\BExp$, which we also call a \emph{pair substitution}.  
\[
\theta \in \Subst \ ::=\ \set{\sequ{x\mapsto(\bexp,\bexp)}{,\,}}
\]
The two individual substitutions can be obtained in the obvious way, that is, when $\subst(x)=(\bexp_1,\bexp_2)$, then $\subst_1(x) = \bexp_1$ and  $\subst_2(x) = \bexp_2$.
The application of a pair substitution to an expression creates a pair of expressions, that is, $\subst(e)=(\subst_1(e),\subst_2(e))$.
The result of anti-unification must be \emph{minimally general} in the sense that it can be obtained from any other generalization by a substitution, that is, for every $e'$ that is a generalization of $\bexp_1$ and $\bexp_2$ there is a plain (that is, non-pair) substitution $\subst'$ with $\subst'(e')=e$.


We can formalize anti-unification through the following judgment that maps a substitution and two expressions to the generalized expression plus a new substitution.
\[
\gen{\bexp}{\bexp}{\bexp}\subst\subseteq 
Subst\times\BExp\times\BExp\times\BExp\times\Subst
\]
%
%
The rules for anti-unification are shown in Figure \ref{fig:antiunif}.
We stipulate a mapping $\gamma$ that maps each expression construct to a fixed, separate number such that $\gamma(e)=\gamma(e')$ iff $e$ and $e'$ are the same syntactic construct (with possibly different parts); $\gamma$ is used in rule \textsc{\textrm{New}} to identify cases when two expressions are different at the their root, which requires a new entry in the computed pair substitution.
Since each expression is its own LGG, two expressions that are equal don't require any generalization (rule \textsc{Equal}). Rule \textsc{Old} covers the case when two expressions have already been generalized, which is reflected by a corresponding entry in the pair substitution.
We use $\gamma(e)$ in rule \textsc{New} to determine an expression's syntactic form and to decide when two expressions differ in their outermost constructor, in which case they need to be replaced by a variable in their LGG.
Note that rule \textsc{New} covers rules for constants and variables.
%
%
The remaining rules work by recursively anti-unifying subexpressions for expressions of the same syntactic form.
Note that we exclude a rule for the anti-unification of (expression arguments of) ellipsis variables to ensure that different ellipsis variables are identified as different variables and get recorded in the computed substitution.

Anti-unification has the following two properties. Note that the equality in the first lemma holds up only to $\alpha$-equality (which is due to the renaming in the \textsc{Abs} rule).

\begin{figure}[t]
\centering
\begin{mathpar}
\infer[Equal]{ }
      {\gen{\bexp}{\bexp}{\bexp}{\subst}}

\infer[Old]{x\mapsto (\bexp_1,\bexp_2)\in\subst}
      {\gen{\bexp_1}{\bexp_2}{x}{\subst}}

\infer[New]{\gamma(\bexp_1)\neq\gamma(\bexp_2)\\
                  (\bexp_1,\bexp_2)\not\in\rng{\subst} \\ x\mathrm{\ fresh}}
      {\gen{\bexp_1}{\bexp_2}{x}{\subst\cup\set{x\mapsto (\bexp_1,\bexp_2)}}}

\infer[Abs]{\gen{\beL}{\sub\xL\xR\beR}{\bexp}{\subst'}  }
      {\gen{\abs{\xL}{\beL}}{\abs{\xR}{\beR}}{\abs{\xL}{\bexp}}{\subst'}}

\infer[App]{\gen{\lft{\bar{f}}}{\rgt{\bar{f}}}{\bar{f}}{\subst'} \\
       \gen[\subst']{\beL}{\beR}{\bexp}{\subst''}  }
      {\gen{\lft{\bar{f}}\;\beL}{\rgt{\bar{f}}\;\beR}{\bar{f}\;\bexp}{\subst''}}
%
%
%
%
\end{mathpar}
\caption{Anti-unification. Since we use anti-unification only for boundary expressions, we only need anti-unification for abstraction and application. A rule for ellipsis variables is also omitted, since it would interfere with deciding whether a list traversal or the generation of a range is required.}
\label{fig:antiunif}
\end{figure}


\begin{lemma}[Correctness]
\label{lem:correct}
If \gen[\varnothing]{\bexp_1}{\bexp_2}{\bexp}{\subst}, then $\subst(\bexp)=(\bexp_1,\bexp_2)$.
\end{lemma}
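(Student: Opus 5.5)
We prove a generalized statement by induction on the derivation of the anti-unification judgment. For arbitrary $\subst$:
\begin{quote}
if $\gen{\bexp_1}{\bexp_2}{\bexp}{\subst'}$, then $\subst\subseteq\subst'$ and $\subst'(\bexp)=(\bexp_1,\bexp_2)$, provided that $\bexp$'s free variables in $\mathrm{dom}(\subst')$ are exactly the generalization variables introduced.
\end{quote}
The generalization is needed because of rule \textsc{App}: its second premise starts from the nonempty $\subst'$ produced by the first premise. Lemma~\ref{lem:correct} is then the instance $\subst=\varnothing$. Two auxiliary facts are proved along the way.
\begin{itemize}
\item \emph{Monotonicity.} The output substitution extends the input one. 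This is immediate: \textsc{Equal} and \textsc{Old} return $\subst$ unchanged, \textsc{New} adds one binding, and \textsc{Abs} and \textsc{App} compose the inner derivations.
\item \emph{Freshness invariant.} The generalization variables (the domain of the pair substitution) are fresh. In particular, they do not occur in the boundary expressions being anti-unified. So applying $\subst_i$ to a subexpression of an input leaves it unchanged.
\end{itemize}

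The base cases are as follows.
\begin{itemize}
\item \textsc{Equal}: $\bexp$ is the common expression and contains no fresh variables, so $\subst(\bexp)=(\bexp,\bexp)$.
\item \textsc{Old}: $x\mapsto(\bexp_1,\bexp_2)\in\subst$, so $\subst(x)=(\bexp_1,\bexp_2)$.
\item \textsc{New}: the new binding gives $(\subst\cup\{x\mapsto(\bexp_1,\bexp_2)\})(x)=(\bexp_1,\bexp_2)$.
\end{itemize}

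For \textsc{App}, the induction hypotheses give $\subst'(\bar f)=(\lft{\bar f},\rgt{\bar f})$ and $\subst''(\bexp)=(\beL,\beR)$. Monotonicity gives $\subst'\subseteq\subst''$. Since $\bar f$ mentions only variables in $\mathrm{dom}(\subst')$, on which $\subst''$ agrees with $\subst'$, also $\subst''(\bar f)=(\lft{\bar f},\rgt{\bar f})$. Substitution distributes over application, which finishes the case.

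The main obstacle is \textsc{Abs}, where equality holds only up to $\alpha$-equivalence. The induction hypothesis gives $\subst'_1(\bexp)=\beL$ and $\subst'_2(\bexp)=\sub{\xL}{\xR}{\beR}$. We must show the following, which splits into two components.
\begin{itemize}
\item \emph{First component:} $\subst'_1(\abs{\xL}{\bexp})=\abs{\xL}{\beL}$.
\item \emph{Second component:} $\subst'_2(\abs{\xL}{\bexp})\equiv_\alpha\abs{\xR}{\beR}$.
\end{itemize}
For both, substitution must pass under the binder without capture. The fresh generalization variables differ from $\xL$. We also need the ranges of $\subst'$ not to capture $\xL$ wrongly: range entries may themselves mention $\xL$ as a subterm of $\beL$ or of the renamed $\beR$. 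This is intended, since that is exactly the bound variable. So we use a capture-permitting (syntactic) substitution for generalization variables, which is sound because those variables occur nowhere else.

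The two components then follow:
\begin{itemize}
\item The first component follows directly from the induction hypothesis.
\item For the second, the induction hypothesis gives $\abs{\xL}{\sub{\xL}{\xR}{\beR}}$. This is $\alpha$-equivalent to $\abs{\xR}{\beR}$, assuming $\xL$ is not free in $\beR$. If it is, we first $\alpha$-rename both abstractions to a common fresh variable, which is the standard Barendregt convention.
\end{itemize}
I expect this careful statement of how substitution interacts with the binder to be the only delicate point.
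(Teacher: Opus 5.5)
Your proof is correct and follows the paper's own argument: induction on the anti-unification derivation, case by case over \textsc{Equal}, \textsc{Old}, \textsc{New}, \textsc{Abs}, and \textsc{App}, while spelling out what the paper uses only implicitly, namely the generalization to an arbitrary starting substitution that the second premise of \textsc{App} needs, the monotonicity and freshness facts behind the paper's remark that the extra bindings of $\subst''$ do not affect $\bar f$, and capture-permitting substitution under the binder in \textsc{Abs} (your ``provided that'' clause would be better stated as the precondition that $\mathrm{dom}(\subst)$ contains no variable of $\bexp_1$ or $\bexp_2$, with ``fresh'' in \textsc{New} meaning fresh for the top-level inputs). Your variable-convention caveat in \textsc{Abs} is more careful than the paper's bare $=_\alpha$ step, which really does fail without it ($\abs{x}{x}$ versus $\abs{y}{x}$ yields $\abs{x}{x}$ with the empty substitution), so state it as a hypothesis on the inputs, or as a side condition that $\xL$ is not free in $\abs{\xR}{\beR}$ on rule \textsc{Abs}, rather than as a renaming performed inside a given derivation.
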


\noindent
The proofs to this and the following lemma are given in the Appendix.

Lemma \ref{lem:correct} assures us that the anti-unification rules produce correct generalizations for two expressions $\bexp_1$ and $\bexp_2$. In addition, we would like these generalizations also to be minimal so that they represent a tight characterization of the intermediate sequence values between $\bexp_1$ and $\bexp_2$.
Specifically, since the translation creates, in general, a list-producing expression for each substitution computed by anti-unification, we should avoid unnecessary such substitutions.
For example, for the boundary expressions from the RHS of \prog{map} we obtain \gen[\varnothing]{\prog{f x1}}{\prog{f xn}}{\prog{f v}}{\set{\prog{v}\mapsto (\prog{x1},\prog{xn})}} from the anti-unification rules, which prompts the generation of one list. 
However, while the expression \prog{u v} together with the substitution \set{\prog{u}\mapsto (\prog{f},\prog{f}), \prog{v}\mapsto (\prog{x1},\prog{xn})} is also a valid generalization, it is more general than what we need.

The following lemma tells us that the anti-unification rules actually compute \emph{least general} generalizations.

\newcommand\gexp{\ensuremath{\tilde{e}}}
\newcommand\gsubst{\ensuremath{\tilde{\subst}}}
\newcommand\dom{\mathit{dom}}

\begin{lemma}[Minimality]
\label{lem:minimal}
If \gen[\varnothing]{\bexp_1}{\bexp_2}{\bexp}{\subst}, then for all $\gexp,\gsubst$ with $\gsubst(\gexp)=(\bexp_1,\bexp_2)$ there is a plain (that is, non-pair) substitution $\subst^*$ such that $\subst^*(\gexp)=\bexp$.
\end{lemma}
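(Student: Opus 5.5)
We prove a strengthened, invariant form of the statement by induction on the derivation of \gen{\bexp_1}{\bexp_2}{\bexp}{\subst'}, since the top-level statement with $\subst=\varnothing$ is not directly inductive. The invariant must describe how the accumulated pair substitution is \emph{injective on pairs}. Rule \textsc{New} only adds $x\mapsto(\bexp_1,\bexp_2)$ when $(\bexp_1,\bexp_2)\not\in\rng{\subst}$, and rule \textsc{Old} reuses an existing variable. Hence every substitution produced from $\varnothing$ maps distinct variables to distinct pairs. Moreover, every recorded pair has roots that differ: $\gamma(\bexp_1)\neq\gamma(\bexp_2)$, or the pair consists of different constants or variables.

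The strengthened claim is as follows. Assume $\subst$ is pair-injective, and let \gen{\bexp_1}{\bexp_2}{\bexp}{\subst'} with $\subst'$ pair-injective and extending $\subst$. Let $\gexp,\gsubst$ satisfy $\gsubst(\gexp)=(\bexp_1,\bexp_2)$. Then for every plain substitution $\sigma$ that is \emph{compatible} with $\subst$, there is a plain substitution $\sigma'\supseteq\sigma$ compatible with $\subst'$ such that $\sigma'(\gexp)=\bexp$. Here $\sigma$ is compatible with $\subst$ if $\sigma(y)=x$ whenever $\gsubst(y)=\subst(x)$ for a variable $y$ of $\gexp$. Lemma \ref{lem:minimal} is the instance with $\subst=\varnothing$ and $\sigma$ empty. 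The construction of $\sigma'$ reads off $\bexp$ along the structure of $\gexp$: each variable $y$ of $\gexp$ is sent to the subterm of $\bexp$ at the corresponding position.

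The induction proceeds by cases on the last rule. The leading sub-case in every rule is $\gexp$ itself a variable $y$; then $\gsubst(y)=(\bexp_1,\bexp_2)$, and we set $\sigma'(y)=\bexp$. This is consistent with earlier choices because of pair-injectivity together with Lemma \ref{lem:correct}. Indeed, any earlier occurrence of $y$ was generalized from the same pair $(\bexp_1,\bexp_2)$, and the algorithm is deterministic on a pair given $\subst$: either the pair is in $\rng{\subst}$ and \textsc{Old} returns the same $x$, or \textsc{Equal} returns the expression itself. So both occurrences yield the same $\bexp$. The remaining sub-cases are these.
\begin{itemize}
\item For \textsc{Equal}, $\gexp$ is a non-variable, and $\gsubst(\gexp)=(\bexp_1,\bexp_1)$ forces $\gexp$ to agree with $\bexp_1$ on all non-variable positions. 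The map sending each $y$ to $\gsubst_1(y)$ then works, and it is compatible because equal pairs are never recorded.
\item For \textsc{New} and \textsc{Old}, $\gexp$ cannot be a non-variable, since its root would then have to be shared by $\bexp_1$ and $\bexp_2$, contradicting the differing roots. So only the variable sub-case arises.
\item For \textsc{App}, $\gexp$ is a non-variable application $\gexp_f\,\gexp_a$. We apply the induction hypothesis to the function part with $\sigma$, and then to the argument part with the resulting $\sigma''$ and $\subst'$. Compatibility is what guarantees that the two partial results glue together.
\item For \textsc{Abs}, $\gexp=\abs{y}{\gexp_b}$. We $\alpha$-rename so that the bound variable is $\xL$, and we use $\gsubst$ on the body up to that renaming, which is why equality holds only up to $\alpha$.
\end{itemize}

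I expect the main obstacle to be the consistency of $\sigma'$ when a variable $y$ occurs several times in $\gexp$, and in particular showing that the \textsc{App} case threads correctly through the intermediate substitution $\subst'$. The compatibility invariant together with the pair-injectivity of \textsc{New} is designed exactly for this, and stating it precisely enough for the induction to go through is the delicate part. A lesser difficulty is the binder in \textsc{Abs}: variables of $\gexp$ must not capture the bound variable, so I would assume Barendregt's convention throughout.
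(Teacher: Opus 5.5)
There is a genuine gap, and it is at the point you yourself flag as delicate.

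\textbf{The strengthened claim is false as stated.} Compatibility only constrains variables $y$ whose pair $\gsubst(y)$ already lies in $\rng{\subst}$. So $\sigma$ may send any other variable of $\gexp$ to an arbitrary term, and no $\sigma'\supseteq\sigma$ can repair that. For example, take $\subst=\varnothing$, $\bexp_1=\bexp_2=c$, $\gexp=y$, $\gsubst=\set{y\mapsto(c,c)}$ and $\sigma=\set{y\mapsto d}$ with $d\neq c$.

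\textbf{Restricting to the $\sigma$ actually built so far does not rescue the argument.} Your consistency step for a repeated variable distinguishes only two cases: the pair is in $\rng{\subst}$ (\textsc{Old}), or its two components are equal (\textsc{Equal}). It misses pairs of \emph{distinct} expressions with the \emph{same} root, whose generalization is a compound term built by \textsc{App} or \textsc{Abs}. Concretely, \gen[\varnothing]{\prog{g (f x1) (f x1)}}{\prog{g (f xn) (f xn)}}{\prog{g (f v) (f v)}}{\set{\prog{v}\mapsto(\prog{x1},\prog{xn})}}. Take $\gexp=\prog{g y y}$ and $\gsubst=\set{\prog{y}\mapsto(\prog{f x1},\prog{f xn})}$.
\begin{itemize}
\item The first occurrence of \prog{y} is processed from $\varnothing$ and gets \prog{f v}.
\item The second occurrence is processed from $\set{\prog{v}\mapsto(\prog{x1},\prog{xn})}$. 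There the pair $(\prog{f x1},\prog{f xn})$ is neither equal nor in the current range, and your invariant records nothing about it.
\item Pair-injectivity and ``determinism given $\subst$'' do not apply, because the two runs start from different substitutions.
\end{itemize}
Nothing in your induction guarantees that the second run of \textsc{App} returns \prog{f v} again without adding entries.

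\textbf{What is missing is a replay (stability) lemma.} State it as follows: if \gen{\bexp_1}{\bexp_2}{\bexp}{\subst'} and $\subst''\supseteq\subst'$ is pair-injective with all entries having differing roots, then \gen[\subst'']{\bexp_1}{\bexp_2}{\bexp}{\subst''}, and this is the only possible outcome from $\subst''$. It follows by induction on the first derivation. \textsc{New} steps become \textsc{Old} steps. Uniqueness holds because, under your invariants, \textsc{Old} selects a unique variable and cannot compete with \textsc{Equal}, \textsc{App} or \textsc{Abs}.

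\textbf{With this lemma you can drop the $\sigma$-threading.} For the final $\subst$, let $\Phi_\subst(a,b)$ be the result of running the rules on $(a,b)$ from $\subst$.
\begin{itemize}
\item Every position of $\gexp$ is either reached by a sub-derivation or lies inside an \textsc{Equal} region. This is because \textsc{New}/\textsc{Old} positions have differing roots, so they cannot be proper ancestors of positions of $\gexp$.
\item Hence, by the lemma, the subterm of $\bexp$ at each position of $\gexp$ is $\Phi_\subst$ of the corresponding pair of subterms of $\bexp_1$ and $\bexp_2$.
\item Set $\subst^*=\Phi_\subst\circ\gsubst$. It is automatically consistent on repeated variables, and $\subst^*(\gexp)=\bexp$ follows by a plain induction on $\gexp$.
\end{itemize}
Equivalently, you can keep your induction but define compatibility as $\sigma(y)=\Phi_\subst(\gsubst(y))$ for every $y\in\dom(\sigma)$.

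For comparison, the paper argues quite differently. It shows that no entry of $\subst$ can be dropped, and then treats $\gsubst$ as $\subst$ extended by entries whose components coincide. It never considers entries such as $\prog{u}\mapsto(\prog{f x1},\prog{f xn})$, so it does not supply this lemma either. Your derivation-based route is the more robust one, but only once the replay lemma is added.
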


\noindent
With anti-unification we can now show how to translate expressions that employ ellipsis notation into expressions without ellipses.

\subsection{Translation Semantics}
\label{sec:translate}
\label{sec:semantics}

The transformation of lists specified by ellipsis expressions can be explained through a translation into the core language. These translation rules, shown in Figure \ref{fig:semantics}, are driven by the following pieces of information.

\begin{enuma}
\item \textit{The origin and parts of the lists to be transformed.} Which ellipsis expressions denote sublists, and which generate a range of values of an enumeration type?
~Do these lists require reversal?
These decisions are captured by the rules \textsc{Sublist} and \textsc{Range}, supported by the corresponding auxiliary functions \prog{fwd}, \prog{bwd}, \prog{dyn}, and \prog{range}.

\item
\textit{The number of input lists to be transformed.}
This information is gleaned from the substitutions that are generated by the anti-unification premise of rule \textsc{Span}. Each substitution calls for a separate input list, and the number $k$ of substitutions generates a call to \prog{zipWith}$_k$ with a $k$-ary function as a parameter.

\item \textit{The elements to be generated to fill the ellipsis.} This information is captured by the least general generalization produced in the premise of \textsc{Span}.

\item \textit{The potential aggregation of the generated list.} The generated list elements may be subject to aggregation with a binary operation, described by rules \textsc{Fold} and \textsc{Fold1}, or selection, described by rule \textsc{EVar}.
\end{enuma}

\newcommand\tagCR{\ensuremath{\ast}}

%
Before presenting the translation rules, we illustrate the translation with several examples.
Since the translation operates on the syntax of the extended core language defined in Figures \ref{fig:syncore} and \ref{fig:syntax}, we replace equations with lambda abstractions and case expressions.

Consider the simple case in which a function employs a single case rule with an ellipsis pattern 
and an ellipsis expression \eexp.
\begin{align}
\prog{\BS{}x.case x of \BL{}[x1, }\E\prog{, xn] -> }\eexp\BR \tag{\tagCR}
\end{align}
We can have three different cases for the ellipsis expression \eexp. 

First, consider the definition of \prog{map} in which \eexp\ is \prog{[f x1, \E, f xn]}, which means we have no aggregation or selection (information piece (d)).

\begin{programCmd}
map = \BS{}f.\BS{}x.case x of \BL{}[x1, \E, xn] -> [f x1, \E, f xn]\BR
\end{programCmd}

As we have already discussed, anti-unification produces the function \prog{\BS{}v.f v} as the LGG (information piece (c)) together with the substitution \set{\prog{v}\mapsto (\prog{x1},\prog{xn})}. From that we can infer that we are dealing with one input list, that is, $k=1$ (information piece (b)). We also know that the list is given through a binding of the case rule's LHS, and the indices \prog{1} and \prog{n} of the ellipsis variables tell us that the whole of the input list is to be processed in the original order (information piece (a)). 
Altogether, we obtain the following translation.

\begin{program}
map = \f.\x.case x of {x -> let n = length x in
                            zipWith1 (\v.f v) (fwd 1 n x)}
\end{program}


The function \prog{fwd} extracts the required sublist. In this specific instance, it simply rebuilds \prog{x}. We will discuss the definition of \prog{fwd} and its inference later.
The binding for \prog{n} is created to avoid recomputing the length of the list in cases where \prog{n} is referenced in different subexpressions, as illustrated by the next example.

Applying some basic algebraic optimizations to the translations can simplify the resulting definition. For example, by applying the identity $\prog{fwd 1 n x}=\prog{x}$ the unreferenced binding for \prog{n} can be eliminated (although that would not be critical in a lazy evaluation setting). Then we can simplify further by applying the identity $\prog{case x of x ->~}e=e$. 
This simplification is applicable to all function definitions consisting of a single equation, and we will apply it from now on implicitly.  
Finally, three eta-reductions yield the following.

\begin{program}
map = zipWith1
\end{program}

As another example, consider the function \prog{enumerate}, defined in Figure \ref{fig:elliex}. Again, this does not involve aggregation or selection.
Here, anti-unification of $e_1=\prog{(1,x1)}$ and $e_2=\prog{(n,xn)}$ yields the LGG $f=\prog{\BS v1 v2.(v1,v2)}$, a function of two parameters, which means that the translation will generate an application of \prog{zipWith2}, which applies \prog{f} to two lists. The first of these lists computes a range of numbers (using the auxiliary function \prog{range}), and the second list is the same as in the case of \prog{map}, which leads to the following translation (again, simplified by eliminating the unnecessary one-rule case expression).

%

\begin{program}
enumerate = \x.let n = length x in 
               zipWith2 (\v1 v2.(v1,v2)) (range 1 n) (fwd 1 n x)
\end{program}

This definition can be simplified only slightly. While we can eliminate the application of \prog{fwd}, we have to keep the binding for \prog{n}.

\begin{program}
enumerate = \x.let n = length x in zipWith2 (\v1 v2.(v1,v2)) (range 1 n) x
\end{program}


The basic form illustrated by \prog{map} and \prog{enumerate} can be generalized in several different ways. 
First, the indices in the ellipsis variables are not restricted to \prog{1} and \prog{n} (see, for example, \prog{tail} and \prog{init}), and they can occur in a different order, as in \prog{reverse}, which then generates a call to \prog{bwd} instead of \prog{fwd}.
Second, ellipsis expressions may contain different ellipsis variables that refer to different list arguments (as in \prog{zip} and \prog{append}).
And finally, an ellipsis expression can have multiple segments, which are translated individually and then concatenated. 
Examples for this are the definition \prog{append} and the following definition of the function \prog{rotateLeft}.

\begin{programCmd}
rotateLeft k [x1, \E, xn] = [x\BL{}k+1\BR, \E, xn, x1, \E, xk]    
\end{programCmd}

The ellipsis expression on the RHS contains two segments of the same list, each of which is translated into an application of \prog{zipWith1 (\BS{}v.v)}, since the list elements are not transformed in any way.
And since the indices indicate a forward traversal for in both segments, the arguments to \prog{zipWith1} are two \prog{fwd} expressions with the corresponding indices.
Thus the translation of \prog{rotateLeft} will produce the following.

\begin{program}
rotateLeft = \k.\x.let n = length x in 
                   zipWith1 (\v.v) (fwd (k+1) n x) ++ 
                   zipWith1 (\v.v) (fwd 1 k x)
           = \k.\x.let n = length x in fwd (k+1) n x ++ fwd 1 k x
\end{program}

This example also illustrates how sublists are extracted via indices of ellipsis variables that become arguments of traversal functions (\prog{fwd} in this case).

The translation of ellipsis expressions with aggregations proceeds in the same way as for the previous examples. The only difference occurs at the very end where we generate an additional fold operation. 
For example, the \prog{fac} and \prog{sum} definitions fit the pattern (\tagCR) where \eexp\ is 
\prog{1 * \E{} * n} and \prog{x1 + \E{} + xn}, respectively.
The translations thus produce applications of \prog{zipWith1} together with \prog{range} (in the case of \prog{fac}) and \prog{fwd} (in the case of \prog{sum}) plus the additional folds.

\begin{program}
fac = \n.foldl1 (*) (zipWith1 (\v.v) (range 1 n))
sum = \x.let n = length x in foldl1 (+) (zipWith1 (\v.v) (fwd 1 n x))
\end{program}

As with the previous examples, these definitions can be simplified.

\begin{program}
fac = \n.foldl1 (*) (range 1 n)
sum = foldl1 (+)
\end{program}


Finally, a selection is given by an ellipsis expression \eexp\ of the form \prog{x}\BL\bexp\BR, which is translated directly into an application of the list index (or subscript) operation \prog{!!}, offset by one, that is, {\prog{x}\BL\bexp\BR} translates to \prog{x!!(}\bexp\prog{-1)}. For example, the translation of \prog{head} leads to the following definition.

\begin{program}
head = \x.x!!0
\end{program}

Of course, since the RHS of a case rule can be an arbitrary expression $e$, we can combine ellipsis expressions in a variety of different ways. Consider the following example.





\begin{programCmd}
variance []           _ = 0
variance [x1, \E, xn] a = (sqr (x1-a) + \E + sqr (xn-a)) / n
\end{programCmd}

The congruence rules of the translation (see the top part of Figure \ref{fig:semantics}) facilitate the translation of ellipsis expressions in the context of other expressions, in this case as the dividend of ``\prog{/ n}''.

\begin{programCmd}
variance = \BS{}x.\BS{}a.case x of
    [] -> 0
    x  -> let n = length x in
              foldl1 (+) (zipWith1 (\BS{}v1.sqr (v1-a)) (fwd 1 n x)) / n
\end{programCmd}








\def\skipv{0pt}
\def\raisev{0pt}
\def\topv{3ex}
\def\botv{0pt}

\begin{figure}[t]
\centering\small
\begin{mathpar}
\raisebox{\raisev}[\topv][\botv]{\fbox{Expressions: \trE{e}{e}}}
\\
\infer[Con]{ }{\trE{c}{c}}

\infer[Var]{ }{\trE{x}{x}}

\infer[Abs]
    {\trE{e}{e'}}
    {\trE{\abs{x}{e}}{\abs{x}{e'}}}

\infer[BinOp]
    {\trE{\eL}{e'_1} \\ \trE{\eR}{e'_2}}
    {\trE{\eL\ \op\ \eR}{e'_1\ \op\ e'_2}}
\\
\infer[App]
    {\trE{\eL}{e'_1} \\ \trE{\eR}{e'_2}}
    {\trE{\eL\;\eR}{e'_1\;e'_2}}

\infer[Let]
    {\trE{\eL}{e'_1} \\ \trE{\eR}{e'_2}}
    {\trE{\xlet{x}{\eL}{\eR}}{\xlet{x}{e'_1}{e'_2}}}
    
\infer[\lCase]
    {\trE{e}{e'} \\
     (\trCaseRule{\xrule{p_i}{e_i}}{\xrule{p'_i}{e'_i}})^{i:1..k}}
    {\trE{\case{e}{\sequx{\xrule{p_1}{e_1}}{\prog{;}}{\xrule{p_k}{e_k}}}}
         {\case{e'}{\sequx{\xrule{p'_1}{e'_1}}{\prog{;}}{\xrule{p'_k}{e'_k}}}}}

\infer[Segments]{(\trSeg{\seg_i}{e_i})^{i:1..k}}
      {\trE{\prog{[}\sequx{\seg_1}{\prog{,}}{\seg_k}\prog{]}}{\sequx{\eL}{\,\prog{++}\,}{e_k}}}

\infer[Fold1]
      {\trSeg{\bexp_1\prog{,\E,}\bexp_2}{e}}
      {\trE{\bexp_1\ \op\ \E\ \op\ \bexp_2}
             {\prog{foldl1}\ \op\ e}}
             
\infer[Fold]
      {\trSeg{\bexp_1\prog{,\E,}\bexp_2}{e}}
      {\trE{e'\ \op\ \bexp_1\ \op\ \E\ \op\ \bexp_2}
             {\prog{foldl}\ \op\ e'\ e}}

\infer[EVar]{\trE{e}{e'}}
      {\trE{\evar\BL{}e\BR}{\evar\prog{!!(}e'\prog{-1)}}}
\end{mathpar}
\vskip\skipv
\begin{mathpar}
\raisebox{\raisev}[\topv][\botv]{\fbox{Case rules: \trCaseRule{\xrule{p}{e}}{\xrule{p}{e}}}}
\\
\infer[Var Pattern]
    {\trE{e}{e'}}
    {\trCaseRule{\xrule{x}{e}}{\xrule{x}{e'}}}

\infer[Con Pattern]
    {(\trCaseRule{\xrule{p_i}{e_{i-1}}}{\xrule{p'_i}{e_i}})^{i:1..k}}
    {\trCaseRule{\xrule{c\ \sequx{p_1}{\ }{p_k}}{e}_0}
         {\xrule{c\ \sequx{p'_1}{\ }{p'_k}}{e_k}}}

\infer[Ellipsis Pattern]
    {\trE{e}{e'}}
    {\trCaseRule{\xrule{\prog{[}\evar\prog{1}\prog{,\E,}\evar\ivar\prog{]}}{e}}{\xrule{\evar}{\prog{let}\ \ivar\prog{=length}\ \evar\ \prog{in}\ e'}}}
\end{mathpar}
\vskip\skipv
\begin{mathpar}
\raisebox{\raisev}[\topv][\botv]{\fbox{Segments: \trSeg{\seg}{e}}}
\\
\infer[Span]
      {\gen[\varnothing]{\bexp_l}{\bexp_r}{\bexp}{\set{(x_i\mapsto (\bexp_{i_l},\bexp_{i_r}))^{i:1..k}}} \\
       (\zeta_i=\sliceOrRange{\bexp_{i_l}}{\bexp_{i_r}})^{i:1..k}}
      {\trSeg{\bexp_1\prog{,\E,}\bexp_2}
             {\prog{zipWith}_k\ \prog{(}\abs{x_k\ldots x_1}{\bexp}\prog{)}\ \zeta_k\ \ldots\ \zeta_1}}

\infer[Single]
      {e\neq \eL\prog{,\E,}\eR
      \\
        \trE{e}{e'}}
      {\trSeg{e}{\prog{[}e'\prog{]}}}
\end{mathpar}
\vskip\skipv
\begin{mathpar}
\raisebox{\raisev}[\topv][\botv]{\fbox{Sublists and ranges: \sliceOrRange{\bexp}{\bexp}}}
\\
\infer[Sublist]{\bexpa=\evar\BL{}e_1\BR \\ \bexpb=\evar\BL{}e_2\BR}
      {\sliceOrRange{\bexpa}{\bexpb} = {\traversal{e_1}{e_2}\ e_1\ e_2\ \evar}}

\infer[Range]{\isRange{\bexpa}{\bexpb} \\ \trE\bexpa{e_1} \\ \trE\bexpb{e_2}}
      {\sliceOrRange{\bexpa}{\bexpb} = {\prog{range}\ e_1\ e_2}}

%
%
\end{mathpar}
\caption{Translation semantics. 
} 
\label{fig:semantics}
\end{figure}

The rules for translating programs with ellipsis patterns and expressions shown in Figure \ref{fig:semantics} are grouped into four sections.

The first part consists of rules of the form \trE{e}{e} that describe how to generate expressions in the core language by eliminating ellipsis patterns and expressions.
The first seven of these rules simply cover the base cases of constants and variables and provide congruence rules for the standard language constructs.
The last of these rules, \textsc{Case}, translates case rules in its second premise, described in the second section of Figure \ref{fig:semantics}, discussed below.
The final four rules in this group define the translation of ellipsis expressions.
First, a list of segments is translated into the concatenation of the translation of each segment (rule \textsc{Segments}). 
Second and third, fold expressions are translated by first translating their span arguments and then simply adding a \prog{foldl} or \prog{foldl1} application to the resulting $\prog{zipWith}_k$ expression (rules \textsc{Fold} and \textsc{Fold1}).
These three rules make use of rules for translating segments that are described in the third section of Figure \ref{fig:semantics}.
Fourth, rule \textsc{EVar} translates ellipsis variables into list indexing expressions.

The second group of rules for translating case rules have the form \trCaseRule{\xrule{p}{e}}{\xrule{p}{e}} and consists of three rules. 
The rule \textsc{Var} keeps single variables as patterns unchanged and simply translates the RHS of the case rule.
Rule \textsc{Con Pattern} translates nested patterns while also accumulating translated RHS expressions.
The final rule in this group, \textsc{Ellipsis Pattern}, shows how a case rule with an ellipsis pattern is translated into a corresponding rule with an additional binding for the length of the matched list added to the RHS. 

To illustrate the point of the rule \textsc{Con Pattern}, consider the function \prog{append}, shown here for instructional purposes as defined by a case rule using a tuple.

\begin{programCmd}
append = \BS{}l1.\BS{}l2.case (l1,l2) of 
                  \BL([x1, \E, xn],[y1, \E, ym]) -> [x1, \E, xn, y1, \E, ym]\BR
\end{programCmd}

In the conclusion of \textsc{Con Pattern}, $c$ is the tuple constructor taking arguments $p_1=\prog{[x1, \E, xn]}$ and $p_2=\prog{[y1, \E, ym]}$, with $e_0=\prog{[x1, \E, xn, y1, \E, ym]}$ on the RHS.
Since $c$ takes two arguments, the rule instantiates $k=2$ premises. The first premise for $i=1$ is obtained by an application of rule \textsc{Ellipsis Pattern} to \xrule{p_1}{e_0}, resulting in the following intermediate case rule where $p_1'=\prog{x}$ and $e_1$ is the translation of $e_0$, that is, the concatenation of the two \prog{zipWith1} expressions.

\begin{programCmd}
(x,[y1,\E,ym]) -> let n=length x in 
                  zipWith1 (\BS{}v.v) (fwd 1 n x) ++
                  zipWith1 (\BS{}v.v) (fwd 1 m y)
\end{programCmd}

In the second premise (for $i=2$), rule \textsc{Ellipsis Pattern} is applied once more, now to \xrule{p_2}{e_1}, which yields $p_2'=\prog{y}$ and $e_2=\prog{let m=length y in }e_1$.
This completes the translation, producing the following case rule.

\begin{programCmd}
(x,y) -> let m=length y in 
         let n=length x in
         zipWith1 (\BS{}v.v) (fwd 1 n x) ++
         zipWith1 (\BS{}v.v) (fwd 1 m y)
\end{programCmd}

This example demonstrates the cumulative behavior of rule \textsc{Con Pattern}, which enables the retention of bindings introduced by multiple ellipsis patterns.

The two rules in the third group for translating segments have the form \trSeg{s}{e}. Rule \textsc{Single} translates expressions that denote individual list values into singleton lists. Rule \textsc{Span} translates spans into applications of $\prog{zipWith}_k$, employing anti-unification as illustrated in detail earlier. 

Finally, the rules in the fourth group, \textsc{Sublist} and \textsc{Range}, translate pairs of expressions obtained by anti-unification into an expression that computes either a sublist or a range.
This decision is based on whether the substitution entry from anti-unification is a pair of ellipsis variables, which demand a sublist, or any other pair of expressions, which call for a range of values from an enumerated type.
In other words, rule \textsc{Range} is selected whenever at least one of the expressions is \emph{not} an ellipsis variable, a condition encapsulated in the predicate \rangeName, which is defined as follows.
\[
\isRange\bexpa\bexpb :\Leftrightarrow \forall \evar,\bexp. \bexpa\neq\evar\BL{}\bexp\BR \vee \bexpb\neq\evar\BL{}\bexp\BR
\]
In this case the translation generates an application of the function \prog{range}, which is a wrapper around the already existing ellipsis notation in Haskell.

\begin{program}
range l r = [l, if r>l then l+1 else l-1..r]
\end{program}

Rule \textsc{Sublist} generates different list traversals based on the relationship of the indexing expressions $e_1$ and $e_2$. In general, the traversal has to decide whether to retain or reverse the order of elements in the produced list and also whether that list should be empty.
This is done as follows.
For \e1 and \e2 we determine the ``relative location'' they indicate in the list, which is one of $\{\Lft,\Mdl,\Rgt\}$ where $\Lft<\Mdl<\Rgt$.
This determination is based on the set $V(e_1)$ and $V(e_2)$ of variables they reference, which may include index variables ($\ivar\in\IVar$) that indicate the lengths of lists, such as \prog{n} or \prog{m}, but also variables that are bound outside of the ellipsis pattern, such as \prog{k} in the definition of \prog{drop} or \prog{rotateLeft}. 
Index variables indicate \Rgt\ as a list location, other variables indicate \Mdl, and an indexing expression with no variables indicate \Lft\ as a list location.
\[
\Loc(e) = \left\{\begin{array}{@{}l@{\quad}l}
\Lft & \mathrm{if\ }V(e)=\varnothing \\
\Mdl & \mathrm{if\ }\not\exists\ivar\in V(e) \wedge V(e)\neq\varnothing \\
\Rgt & \mathrm{if\ }\exists\ivar\in V(e)
\end{array}\right.
\]
With the help of \Loc\ we can determine the relative position of the indexing expressions \e1 and \e2 with respect to one another. 
With that we can decide (a) which direction the list has to be traversed and (b) whether the sublist is empty. Specifically, if $\Loc(e_1)<\Loc(e_2)$, we know statically that the list should be traversed in the forward direction, whereas if $\Loc(e_1)>\Loc(e_2)$, the elements are to be visited in reverse order. If $\Loc(e_1)=\Loc(e_2)$, the traversal direction is determined dynamically based on the values of \e1 and \e2. The function \traversalName\ maps index expressions based on their relative location to functions that implement the required traversal.
\[
\traversal{e_1}{e_2} = \left\{\begin{array}{@{}l@{\quad}l}
\prog{fwd} & \mathrm{if\ }\Loc(e_1)<\Loc(e_2) \\
\prog{bwd} & \mathrm{if\ }\Loc(e_1)>\Loc(e_2) \\
\prog{dyn} & \mathrm{if\ }\Loc(e_1)=\Loc(e_2)
\end{array}\right.
\]
The following table illustrates the different cases that might occur.
%
%
It shows the expected results of an ellipsis expression \prog{[x\BL}$e_1$\prog{\BR,\E,x\BL}$e_2$\prog{\BR]} for the list $\prog{x} = \prog{[1,2,3,4,5]}$, depending on the relationship between $\Loc(e_1)$ and $\Loc(e_2)$ and concrete values \sem{e_1} and \sem{e_2} of the index expressions \e1 and \e2.
Let \prog{n} be an index variable, and let \prog{k} and \prog{l} be plain variables.
\[
\begin{array}{cccc|cc|l}
\multicolumn{3}{c}{\emph{Index expressions}} & &&
\\
\multicolumn{3}{c}{\emph{and their relationship}} & \traversalName & 
\multicolumn{2}{c|}{\emph{Index values}} & \emph{Expected results}
\\
e_1 & e_2 & \cmp{e_1}{e_2} & & \sem{e_1} & \sem{e_2} 
& \prog{[x\BL}e_1\prog{\BR,\E,x\BL}e_2\prog{\BR]} \\
\hline
\prog{k} & \prog{n} & < & \prog{fwd} 
   & 2 & 4 & \prog{[2,3,4]}\\
&&&& 3 & 3 & \prog{[3]}\\
&&&& 4 & 2 & \prog{[]}\\
\hline
\prog{n} & \prog{k} & > & \prog{bwd} 
   & 2 & 4 & \prog{[]}\\
&&&& 3 & 3 & \prog{[3]}\\
&&&& 4 & 2 & \prog{[4,3,2]}\\
\hline
\prog{k} & \prog{l} & = & \prog{dyn} 
   & 2 & 4 & \prog{[2,3,4]}\\
&&&& 3 & 3 & \prog{[3]}\\
&&&& 4 & 2 & \prog{[4,3,2]}\\
\end{array}
\]
We conclude with the definitions of the functions \prog{fwd}, \prog{bwd}, and \prog{dyn}, which implement the desired behavior.

\begin{program}
fwd l r = take (r-l+1) . drop (l-1)
bwd l r = reverse . fwd r l
dyn l r | l<=r      = fwd l r
        | otherwise = bwd l r
\end{program}





\section{Split Patterns}
\label{sec:splitpatterns}

When programming with ellipsis notation, most of the work is done by ellipsis expressions. 
It is striking that an ellipsis pattern essentially consists of just two variables, \evar\ and \ivar---the rest is just syntactic embellishment for triggering the ellipsis intuition of the programmer.
%
%
However, we can enhance the expressiveness of ellipsis notation through a minor extension of ellipsis patterns that has a computational interpretation.
To this end, we introduce the concept of \emph{split patterns}, which essentially consist of two concatenated ellipsis patterns plus an attached expression. 
Split patterns have the following syntax.
\[
\begin{array}[b]{r@{\ }c@{\ }l@{\ \ }c@{\ \ }l}
\epat \IN \EPat \IS \ldots \OR 
   \prog{[}\evar\pI\prog{,\E,}\evar\ivar\prog{,}
           \evar[y]\pI\prog{|}e\prog{,\E,}\evar[y]\ivar[\kappa]\prog{]}
\end{array}
\]
The idea of split patterns is to split lists into two segments based on a predicate. Specifically, when a split pattern is matched against a list, the variable \ivar\ is bound to the largest value such that the expression $e$ evaluates to \prog{False} for all elements \evar\prog{1} through \evar\ivar. This also means that the expression evaluates to \prog{True} for the element \evar[y]\prog{1} (if any such element exists in the list).

Here is a concrete example of a split pattern for matching the list of even numbers at the beginning of a list.

\begin{programCmd}
[x1, \E, xk, y1|odd, \E, yn]
\end{programCmd}

The list variables (here \prog{x} and \prog{y}) as well as the index variables (here \prog{k} and \prog{n}) of the two ellipsis patterns must be different, and the expression (here \prog{odd}) associated with \prog{y1} should be of type \prog{a -> Bool} where \prog{a} is the type of the list elements. 
The semantics of split patterns suggests the following syntactic sugar, which can be convenient in cases when the negated predicate is easier to express or if its use is more intuitive.

\begin{programCmd}
[x1, \E, xk, y1|p, \E, yn]  =  [x1, \E, xk|not.p, y1 \E, yn]
\end{programCmd}

Thus if the previous example seems like an odd way of expressing matching even numbers, we could instead use the following alternative pattern.

\begin{programCmd}
[x1, \E, xk|even, y1 \E, yn]
\end{programCmd}

A split pattern basically realizes the idea of Haskell's \prog{break} function, which splits a list into two lists based on a predicate, so that the first list contains the prefix of the list for which the predicate is not true and the second list starts with an element for which the predicate is true.  
In fact, \prog{break} can be defined with the help of a split pattern as follows .

\begin{programCmd}
break p [x1, \E, xk, y1|p, \E, yn] = ([x1, \E, xk],[y1, \E, yn])
\end{programCmd}

Of course, a split pattern has other uses as well. For example, we can define the function \prog{filter} as follows.

\begin{programCmd}
filter _ []                         = []
filter p [x1, \E, xk|p, y1, \E, yn] = [x1, \E, xk] ++ filter p [y2, \E, yn]
\end{programCmd}

Note that \prog{y1} has been dropped in the recursive call. Since \prog{y1} is the first element for which \prog{p} is false, a perhaps clearer version of the second equation uses \prog{not.p} in the second list.

\begin{programCmd}
filter p [x1, \E, xk, y1|not.p, \E, yn] = [x1, \E, xk]++filter p [y2, \E, yn]
\end{programCmd}

Another use of a split pattern is locating elements with particular properties. The function \prog{findIndex} provides a most basic example.

\begin{programCmd}
findIndex p [x1, \E, xk, y1|p, \E, yn] = if n==0 then Nothing else Just k
\end{programCmd}

As a concluding example, we define the merge sort algorithm in terms of a helper function \prog{merge}, which employs a split pattern.

\begin{programCmd}
mergesort []           = []
mergesort [x]          = [x]
mergesort [x1, \E, xn] = mergesort [x1,           \E, x\{div n 2\}] `merge` 
                         mergesort [x\{div n 2+1\}, \E, xn]

merge [] ys = ys
merge xs [] = xs
merge [x1, \E, xn] [y1, \E, yk|(<=x1),z1, \E, zm] =
                   [y1, \E, yk] ++ merge [z1, \E, zm] [x1, \E, xn]
\end{programCmd}

The rule in Figure \ref{fig:splitpattern} shows how split patterns are translated into an application of the predefined function \prog{break}. (We use as syntactic sugar a pair of variables in a lambda abstraction to avoid the notational overhead of an additional case expression.)

\begin{figure}[t]
\centering
\begin{mathpar}
\infer[Split Pattern]
{\trCaseRule{
  \xrule{\prog{([}
    \evar\prog{1,\E,}\evar\ivar
  \prog{],[} 
    \evar[y]\prog{1,\E,}\evar[y]\ivar[\kappa]
  \prog{])}}{e}
  }
  {\xrule{\prog{(}\evar\prog{,}\evar[y]\prog{)}}{e'}} \\
  \trE{f}{f'} \\
  \prog{z}\ \mathrm{fresh}}
{\trCaseRule{
 \xrule{\prog{[}
    \evar\prog{1,\E,}\evar\ivar
  \prog{,} 
    \evar[y]\prog{1|}f\prog{,\E,}\evar[y]\ivar[\kappa]
  \prog{]}}{e}
  }
  {\xrule{\prog{z}}{\prog{(\BS(}\evar\prog{,}\evar[y]\prog{).}e'\prog{) (break}\ f'\ \prog{ z)}}}
}

%
\end{mathpar}
\caption{Translation semantics of split patterns. A split pattern behaves the same as two ellipsis patterns applied to a list split in two by the function \prog{break}.
} 
\label{fig:splitpattern}
\end{figure}

\section{Evaluation}
\label{sec:eval}

The \prog{Data.List} module of Haskell's standard base library provides a set of fundamental operations on lists, such as \prog{map} and \prog{filter}.
These functions cover a wide range of functionality on lists and can thus be used as a basis for answering the question of how applicable ellipsis notation is to operations on lists in general.

Figure \ref{fig:data_list_table} depicts all categories described by section headers within the documentation of \prog{Data.List} \cite{DataList}. Alongside each category, the total number of evaluated functions is indicated in addition to how many can be reasonably definable using ellipses.
In this evaluation, a ``reasonable'' definition constitutes a usage of ellipses that preserve or improve the clarity of the definition, in which there exists no clearer definition with a preexisting notation such as list comprehensions.
Furthermore, we have omitted some functions from this analysis (for example versions of $\prog{zip}_k$ and $\prog{unzip}_k$ for $k>2$) that would not contribute any additional tests for the expressiveness of ellipsis notation.


Bolstering this evaluation is the usage of ellipses by \prog{Data.List}'s documentation to describe the behavior of some functions.
For instance, $\prog{map}$ and $\prog{foldr}$ are described as follows.
\begin{programCmd}
map f [x1, x2, \E, xn] == [f x1, f x2, \E, f xn]

foldr f z [x1, x2, \E, xn] == x1 `f` (x2 `f` \E (xn `f` z)\E)
\end{programCmd}

The documentation also takes more liberties with the notation by using \textit{unbounded} ellipses, as in the case of \prog{iterate} and an alternative description of \prog{map}.
\begin{programCmd}
iterate f x == [x, f x, f (f x), \E]

map f [x1, x2, \E] == [f x1, f x2, \E]
\end{programCmd}

\newlength{\digwidth}
\settowidth{\digwidth}{$9$}
\newcommand\bl{\makebox[\digwidth][c]{ }}
\newcommand\blz{\bl\ }

\begin{figure}[t]
\centering
\begin{tabular}{|l|c|c|}
\hline
\textbf{Sections in $\prog{Data.List}$} & \textbf{Standard} & \textbf{Ellipsis Notation} \\
\hline
Basic functions & \bl7 & \bl7 \\
List transformations & \bl7 & \bl4 \\
Reducing lists (folds) & 14 & 12 \\
Building lists & 11 & \blz \\
Sublists & 14 & \bl6 \\
Searching lists & \bl6 & \bl2 \\
Indexing lists & \bl5 & \bl1 \\
Zipping and unzipping lists & \bl3 & \bl3 \\
Special lists & 11 & \bl2\\
Generalized functions & 10 & \bl2\\
\hline
\textbf{Total} & 88 & 39 \\
\hline
\end{tabular}
\caption{Evaluation of the applicability of ellipsis notation to a subset of the \prog{Data.List} module of Haskell's base library. The \textbf{Ellipsis Notation} column shows how many functions of the total are reasonably definable using ellipses. Some operations were omitted from this evaluation---functions that do not take a list as input (such as \prog{singleton}), strict variants of otherwise evaluable functions (such as \prog{foldl'}), and ``\prog{generic}'' operations meant to overload existing functionality (such as \prog{genericLength}) were not considered. Moreover, functions encompassing the same functionality (namely, $\prog{zip}_k$, $\prog{unzip}_k$, and $\prog{zipWith}_k$) were evaluated as one.}
\label{fig:data_list_table}
\end{figure}


While the existing categorization of the $\prog{Data.List}$ functions is quite clear, it is still somewhat difficult to pinpoint where ellipsis notation is applicable in broader categories such as ``Basic functions'' and ``Sublists''.
To this end, Figure \ref{fig:list_comprehensions_table} presents an alternate categorization---greatly inspired by the existing categorization---intended to more precisely convey functional intent.
This reordering allows a more critical analysis of what ellipsis notation excels in representing, as well as its weaknesses.
This analysis also includes ellipsis notation extended by split patterns. 

\newlength{\hdwidth}
\settowidth{\hdwidth}{\textbf{EN + SP}}

\begin{figure}[t]
\centering\small
\begin{tabular}{|l|l|c|c|c|c|}
\hline
&&& 
\textbf{Ellipsis} & 
\textbf{List} & \textbf{Ell. + Split} 
\\ 
\textbf{List operation} & \textbf{Examples} & \textbf{Standard} & 
\textbf{Notation} & 
\textbf{Compr.} & \textbf{Patterns} \\ 
\hline
Mapping & \prog{map}, \prog{zip}$_k$, \prog{zipWith}$_k$ & \bl3 & \win{\bl3} & 1 &   \\  
Transformations & \prog{transpose}, \prog{unzip}$_k$ & \bl6 & \win{\bl3} & 2 &  \\
Reductions & \prog{fold}, \prog{mapAccum} & 20 & \win{16} & 4 &  \\
Constructions & \prog{(++)}, \prog{scan}, \prog{repeat} & 18 & \bl1 & \win{5} & \win{5} \\
Measuring & \prog{null}, \prog{length} & \bl2 & \win{\bl2} & 1 &  \\
Indexing & \prog{elemIndex}, \prog{findIndex} & \bl4 & \blz & \win{4} & \win{4} \\
Sorting & \prog{sort}, \prog{insert} & \bl4 & \blz & \win{4} & \win{4} \\
\hline
\multicolumn{6}{|l|}{Sublist extraction} \\
\hline
...by length & \prog{reverse}, \prog{take}, \prog{drop} & \bl6 & \win{\bl6} & 5 &  \\
...by predicate & \prog{break}, \prog{span} & \bl5 & \bl1 &   & \win{5} \\
\hline
\multicolumn{6}{|l|}{Searching} \\
\hline
...by index & \prog{(!!)}, \prog{head}, \prog{last} & \bl3 & \win{\bl3} & \win{3} &  \\
...by equality & \prog{elem}, \prog{notElem}, \prog{lookup} & \bl3 & \bl2 & \win{3} & \win{3} \\
...by sublist & \prog{isPrefixOf}, \prog{isSuffixOf} & \bl3 & \win{\bl2} &  &  \\
...by predicate & \prog{find}, \prog{filter}, \prog{intersect} & 11 & \blz & 5 & \win{9} \\
\hline
\multicolumn{2}{|l|}{\textbf{Total}} & 88 & 39 & 37 & 65 \\
\hline
\end{tabular}
\caption{Evaluation of the applicability of ellipsis notation alongside list comprehensions (with no extensions such as parallel list comprehensions) to a recategorization of the \prog{Data.List} module of Haskell's base library. Also added are the numbers for ellipsis notation with split patterns. An empty entry in the last column means that the number of reasonably definable functions did not increase, that is, the number is identical to the entry in \textbf{Ellipsis Notation}. This way, it becomes more apparent where additional benefits of split patterns can be expected.}
\label{fig:list_comprehensions_table}
\end{figure}

While it is striking that ellipsis notation is able to express a significant number of $\prog{Data.List}$ functions, the table in Figure \ref{fig:list_comprehensions_table} reveals that, for instance, ellipses are rather unsuccessful in searching lists without extension. 
Similarly, the recategorization also highlights what kind of functions can profit from the extension of split patterns.

As is apparent from the examples in Section \ref{sec:translate}, the rather simple translation that defines the semantics can produce inefficient results for functions using ellipses, in particular, through the generation of calls to \prog{fwd}/\prog{bwd}/\prog{dyn}, \prog{range}, or \prog{length} that could be either combined with one another or avoided altogether.
To get a sense of the overall (in)efficiency we compare the results of the translations with the definitions used in \prog{Data.List} and measure the overhead they incur in terms of additional list traversals.
Figure \ref{fig:overhead} shows in the $n$th row how many functions require $n$ additional list iterations. 
The columns \textbf{Naive Translation} give the numbers for the plain translation, and the column \textbf{Basic Optimization} gives the numbers when we apply simple optimizations, for example, those indicated in Section \ref{sec:translate}. Specifically, unnecessary \prog{length} computations, computations of identity sublists, and instances of $\prog{zipWith1~(\BS{}v.v)}$ were removed. We show the numbers separately for functions that use plain ellipsis expressions and those that also use split patterns.
The naive translations for plain ellipsis notation leads on average to 2.8 additional iterations, which reduces to an average of of 0.8 when using optimizations. For 12 functions that use split patterns (non-recursively) the averages are, respectively, 4.3 and 1.2. The recursive use of split patterns, which happens in 14 functions, generally leads to a quadratic overhead.

Thus the code that results from the translation of ellipsis notation is only in about 40\% of the cases as efficient as existing, more direct implementations. In the other cases, it introduces unnecessary list traversals. 
This means that ellipsis notation should be used judiciously and not indiscriminately. 

This is not a new situation for functional programming. 
Take, for example, the simple implementation of \prog{reverse}, shown in Section \ref{sec:elliexpl}, that uses \prog{(++)} and has quadratic runtime. This version is routinely used to illustrate programming with lists 
\cite[p.\ 122]{BW88}, 
\cite[p.\ 99]{Bir98}, 
\cite[p.\ 265]{Brady2017}, 
\cite[p.\ 69]{Hud00},
\cite[p.\ 50]{Hut07},
\cite[p.\ 72]{Pau96},
\cite[p.\ 81]{Stump2016}, 
\cite[p.\ 206]{Tho11}, 
\cite[p.\ 55]{Ull98}, 
even though a linear-time definition using a two-parameter helper function is widely known and not much more complicated. 
The inefficient version is then sometimes used to illustrate the derivation of a more efficient implementation, and this is exactly what can be done with potentially inefficient function definitions that employ ellipsis notation.
Similarly, lists are frequently used as a simple representation for sets and dictionaries even if binary search trees are readily available in libraries as efficient alternatives.
Starting with a succinct and easily understandable implementation that can later be optimized for efficiency (if needed) is a well accepted approach to programming. We believe the same attitude should be adopted toward ellipsis notation.

\begin{figure}[t]
\centering
\begin{tabular}{|c|c|c||c|c|}
\cline{2-5}
\multicolumn{1}{l|}{} & 
\multicolumn{2}{c||}{\textbf{Ellipsis Notation}} &
\multicolumn{2}{c|}{\textbf{+ Split Patterns}} \\
\hline
\textbf{Additional} & \textbf{Naive} & \textbf{With Basic} & \textbf{Naive} & \textbf{With Basic}\\
\textbf{Iterations} & \textbf{Translation} & \textbf{Optimizations}& \textbf{Translation} & \textbf{Optimizations} \\
\hline
0 & \bl2 &   22 & \blz & \bl4 \\
1 & \bl3 & \bl9 & \bl2 & \bl4 \\
2 &   15 & \bl4 & \bl2 & \bl2 \\
3 & \bl8 & \bl1 & \bl2 & \bl2 \\
4 & \bl7 & \bl3 & \blz & \blz \\
5 & \bl2 & \blz & \blz & \blz \\
6 & \bl1 & \blz & \bl2 & \blz \\
7 & \bl1 & \blz & \bl4 & \blz \\
\hline
\end{tabular}
\caption{Linear overhead of translated functions. For simple ellipsis notation, the second column lists the numbers for the plain translation, and the third column gives the numbers when simple optimizations have been applied. The last two columns show the numbers for the examples that also use split patterns, excluding 14 functions that use them recursively, which incurs quadratic overhead in the worst case.}
\label{fig:overhead}
\end{figure}

\section{Related Work}
\label{sec:relaw}


Numerous studies have established that programming language syntax matters.
Stefik and Siebert provide an overview of the field \cite{Stefik2013}. They also present results of two studies on intuitiveness of programming language syntax and the effects of syntax on programmers. They report evidence that language syntax matters to novice programmers and influences the accuracy of their programs.
Their results confirm earlier studies that showed that language notation has an impact on novice programmers, for example, \cite{Denny2012,Denny2011,Kelleher2005,Wiedenbeck1997}.

While there have been numerous studies of programming language features and their effect on programmers \cite{Kaijanaho2014},\footnote{Most of the studies have investigated imperative languages (especially Java), although there have also been several studies on functional languages
\cite{Tirronen2015,Singer2018,SwitchFL,HOFselection,HOFconcepts,RainfallFP,PNE22fie}.} 
there is no general theory or guide for designing the syntax of new programming language features (or for redesigning the syntax of existing ones). One notable exception is the cognitive dimensions framework \cite{GP96,Blackwell2003a} for supporting the design of notations. Cognitive dimensions are quite helpful in showing trade-offs between different goals in designing a language.


According to \cite{HaskellHistory}, the list comprehension notation was originally proposed by John Darlington and first built into the language KRC by David Turner. The name ``list comprehension'' was introduced by Phil Wadler \cite{Wad85}.
Since then, list comprehensions have been adopted by other languages and can now be found (sometimes with modifications) in Python, Racket, C\# (as LINQ), Erlang, and Javascript (as array comprehensions).
The list comprehension notation has also been suggested as a query language for relational databases \cite{Tri91}.
Moreover, the popularity of list comprehensions has spurred further language design. For example, the GHC compiler supports extensions such as parallel list comprehensions, and the use of list comprehensions as a query language has been further supported by extensions for group-by and order-by qualifiers \cite{PW07}.


Inspirations for the design of syntax can often be taken from the notations that are used in specific domains. Adopting a well-known and well-understood notation supports its acceptance and wide-spread use by programmers. 
The ellipsis symbol (``$\ldots$'') is conveniently used to elide elements from an enumerated list of items, assuming that the reader understands the pattern underlying the list and can complete the series by themselves.
Ellipses are frequently used in mathematical settings to describe unbounded series such as $1,4,9,\ldots$, as well as bounded ones, such as A, $\ldots$, Z.
In addition to defining sets, ellipses are also used to define functions, such as $\sum_{i=1}^n i=1+\ldots+n$ or $n!=1\times\ldots\times n$.
%
%
And in the realm of programming language theory, ellipses have also been used for describing program or language extensions \cite{Pie02}.


One of the attractive features offered by ellipses is that they can replace recursive definitions by shorter iterative ones \cite{Tripledots}.
However, the existing use of ellipses in programming seems limited. 
For example, in Haskell one can use two periods to denote bounded and unbounded sequences, such as the expression \prog{[1,3..]} that generates the list of odd numbers. The notation is syntactic sugar for the functions \prog{enumFrom} and \prog{enumFromThen}, which are defined for many types and can be extended by the user.
One could use the notation to approximate the mathematical definitions of, say, the factorial function as follows.

\begin{program}
fac n = foldr1 (*) [1..n]
\end{program}
However, the need for \prog{foldr1} to insert the multiplication operator between the numbers in the list makes this definition much less attractive than the mathematical version, and it also requires an understanding of \prog{foldr1}.
Moreover, one cannot use the sequence notation in patterns, which further limits its use in defining functions.

An approach for using ellipses in LISP programs was presented by \citet{ElliLisp}. With their proposed extension, we can define the factorial function in the following way.

\begin{program}
DEFINES[FAC (SLAMBDA(X)(1 * 2 * 3 ... * X))]
\end{program}

Even though this indicates a promising approach to writing programs using ellipses, we believe that the full potential of the approach can be realized only when integrated into a language with a richer syntactic basis. 
Haskell seems to be particularly promising as a candidate, since it facilitates the definition of functions with equations.
The paper also introduces special notation built on top of ellipses. For example, \prog{MU X(L)(P)} finds the first element in the list \prog{L} that satisfies the property \prog{P} (where references to \prog{x} in \prog{P} will be substituted by the elements from \prog{L}). Here \prog{L} can be expressed using an ellipsis.
A new \prog{SLAMBDA} abstraction form also allows the use of ellipses to form a kind of list pattern. For example, the function for reversing a list can be defined as follows.

\begin{program}
DEFSEQ[REV (SLAMBDA ((X1 X2 ... XN)) (XN X<N-1> ... X1))]
\end{program}
The paper provides an algorithm (called \emph{formula extrapolation}) for inferring a common pattern from the terms given in an ellipsis sequence. But the notation itself is not formalized, and its scope and expressiveness are unclear.
It is unknown to what extent this language extension has been or is currently being used. While the approach successfully incorporates ellipses, it also suffers from some non-trivial syntactic overhead imposed by the embedding into LISP. In any case, LISP may have not been the best launching pad for this language extension because, somewhat ironically, the language, by design, doesn't care much about syntax.


Macro systems, such as the one employed by the Scheme language \cite{Kohlbecker1987,Clinger1991,Dybvig1992}, have leveraged ellipses to express repeated syntactic constructs. 
Here, an ellipsis indicates that the preceding pattern may match multiple times, or that the preceding expression may repeat on expansion.
Repetitions can thus be specified on both input and output in the form of patterns and templates.
For instance, the \prog{and} macro, which returns the first false argument or the last if all are true, can be defined as follows.

\begin{program}
(define-syntax and
    (syntax-rules ()
        [(and)  #t]
        [(and e) e]
        [(and e1 e2 ...) (if e1 (and e2 ...) #f)]))
\end{program}
However, syntactic sequences captured in this manner cannot be directly operated on like typical runtime values---the ellipsis acts as more of a temporary marker interpreted entirely at the time of expansion rather than an interactive structure.


The Rhombus language \cite{Rhombus2023} expands the interactiveness of ellipses by allowing their use in list constructions and patterns. 
Here, an ellipsis serves in part as a binding construct for the remainder of a list.
For instance, summation over a list can be defined as follows. 

\begin{program}
fun
| sum([]): 0
| sum([n, m, ...]): n + sum([m, ...])
\end{program}
Our addition of folds effectively extends the functionality of Rhombus's repetition notation by allowing sequences to be collapsed in addition to being constructed, avoiding the recursive call to \prog{sum}.
We have also taken a more dynamic approach to operating on sublists by introducing an upper bound to the ellipsis, allowing arbitrary sublists to be constructed and collapsed.

An initial proposal for a subset of ellipsis notation presented here has been explored in the baccalaureate thesis \cite{Zim24}, which defines the notation via a translation into a subset of Haskell.


As the Haskell and LISP applications indicate, ellipses are particularly attractive for dealing with list data structures, and thus it is not surprising to find an approach that shows how to use ellipses in proofs about lists \cite{ElliProof}. The authors also describe an implementation of the idea in the $\lambda$\emph{Clam} proof editor.
The limitations of that approach are discussed in \cite{ElliProofLimit}.
In a related work, the Mizar proof editor allows the definition of logical infix connectives with varying number of arguments using ellipses \cite{Mizar,MizarEditor}.


Ellipses have also been used in specification languages for spreadsheets 
\cite{EACK06jfp}
%
and in proof trees to simplify program traces for the purpose of explaining computations 
\cite{BEFG21aplas,BEF23cola}.

\section{Conclusions}
\label{sec:concl}

We have presented an extensions of the pattern and expression syntax of a functional language by ellipsis notation. We have defined the meaning of the language extension through a reduction to programs without the notation. In particular, the meaning of span and fold ellipsis expressions has been given by least general generalizations that are computed via anti-unification.

We have shown that the notation covers a wide variety of function definitions on lists, which can support the introductory teaching of functional programming via an intuitive, well-known notation and methods for generating generic traces and derivations of recursive function definitions.
This might be particularly helpful for reaching people whose first exposure to programming was through an imperative programming language, since ellipsis notation is inherently iterative and thus makes programmers with an imperative programming mindset feel at home when starting to program in a functional language.

The benefits of ellipsis notation come at the price of added runtime complexity that is caused by the translation that often generates unnecessary additional list traversals. 
Approaching programming problems with high-level specifications that can be transformed into more efficient implementations has a long tradition in functional programming, and we believe this perspective should be taken when judging ellipsis notation in terms of efficiency.


In summary, ellipsis notation offers an attractive syntactic supplement to functional languages that facilitates the succinct formulation of many list functions in a more direct and thus more accessible way than equivalent recursive definitions.
Syntax-rich languages such as Haskell offer \prog{let} as well as \prog{where} blocks for local definitions, for instance. They offer equations as well as case expressions for deconstructing data type values, and some offer list comprehensions in addition to \prog{map}, \prog{filter}, and \prog{concat} for writing functions on lists.
Programming with ellipses offers another tool for enriching the syntactic variety and expressiveness of functional languages.

\bibliography{elli,me,fp}

\section*{Appendix}

\appendix

\textsc{Proof} of Lemma \ref{lem:correct} by induction over the derivation of \gen[\varnothing]{\bexp_1}{\bexp_2}{\bexp}{\subst}.

\textsc{Equal}. 
If $\bexp_1=\bexp_2$, then we have \gen[\varnothing]{\bexp_1}{\bexp_2}{\bexp_1}{\varnothing} and $\varnothing(\bexp_1)=(\bexp_1,\bexp_1)=(\bexp_1,\bexp_2)$.

\textsc{Old} and \textsc{New}. 
If \gen[\varnothing]{\bexp_1}{\bexp_2}{x}{\subst} with $x\mapsto(\bexp_1,\bexp_2)\in\subst$, then $\subst(x)=(\bexp_1,\bexp_2)$.

\textsc{Abs}. 
Suppose \gen[\varnothing]{\abs{\xL}{\beL}}{\abs{\xR}{\beR}}{\abs{\xL}{\bexp}}{\subst} 
with \gen[\varnothing]{\beL}{[x_1/x_2]\beR}{\bexp}{\subst}.
Then we know by induction that
$\subst(\bexp)=(\beL,[x_1/x_2]\beR)$, that is, $\subst_1(\bexp)=\beL$ and $\subst_2(\bexp)=[x_1/x_2]\beR$.
We have to show that 
$\subst_1(\abs{\xL}{\bexp})=\abs{\xL}{\beL}$,
which follows because $\subst_1(\abs{\xL}{\bexp})=\abs{\xL}{\subst_1(\bexp)}=\abs{\xL}{\beL}$.
We also need to show that 
$\subst_2(\abs{\xL}{\bexp})=\abs{\xR}{\beR}$, 
which follows because $\subst_2(\abs{\xL}{\bexp})=\abs{\xL}{\subst_2(\bexp)} = \abs{\xL}{[x_1/x_2]\beR} =_\alpha \abs{\xR}{\beR}$.

\textsc{App}. 
If \gen[\varnothing]{\lft{\bar{f}}\;\beL}{\rgt{\bar{f}}\;\beR}{\bar{f}\;\bexp}{\subst''}, then we know
\gen[\varnothing]{\lft{\bar{f}}}{\rgt{\bar{f}}}{\bar{f}}{\subst'}
and \gen[\subst']{\beL}{\beR}{\bexp}{\subst''}, and by induction also
$\subst'(\bar{f})=(\bar{f}_1,\bar{f}_2)$ and $\subst''(\bar{e})=(\bar{e}_1,\bar{e}_2)$.
First, we observe that $\subst'(\bar{f})=\subst''(\bar{f})$, since the additional bindings in $\subst''$ do not affect $\bar{f}$. Therefore, we also have $\subst''(\bar{f})=(\bar{f}_1,\bar{f}_2)$.
With $\subst_1''(\bar{f}\;\bexp)=\subst_1''(\bar{f})\;\subst_1''(\bexp)=\bar{f}_1\;\bexp_1$, 
and $\subst_2''(\bar{f}\;\bexp)=\subst_2''(\bar{f})\;\subst_2''(\bexp)=\bar{f}_2\;\bexp_2$,
we can thus conclude that $\subst''(\bar{f}\;\bexp)=(\bar{f}_1\;\bexp_1,\bar{f}_2\;\bexp_2)$.
%
%
\hfill$\square$

\bigskip\noindent
\textsc{Proof} of Lemma \ref{lem:minimal}.
For the proof of the lemma we consider expressions and their substitutions as equal under $\alpha$-conversion. That is, given two pairs $(e_1,\subst_2)$ and $(e_2,\subst_2)$ such that $e_1=_\alpha e_2$ evidenced by a substitution $\sigma$ with $e_1=\sigma(e_2)$, we consider $\subst_1=_\alpha\subst_2$ iff $\subst_1=\set{(\sigma(x),e)\ |\ (x,e)\in\subst_2}$. This notion of $\alpha$-equivalence extends to set operations such as $\subst_1\subseteq\subst_2$; we omit the $\alpha$ subscript for convenience in the following.

%
%
First, we note that any proper subset of \subst\ would be insufficient to obtain a generalization of \beL\ and \beR\ because any entry $x\mapsto(\bexpa,\bexpb)$ is added to \subst\ only in rule \textsc{New}, and only when $\gamma(\bexpa)\neq\gamma(\bexpb)$. Therefore, omitting any such entry would leave a difference between \beL\ and \beR\ in the computed generalization, rendering it incorrect, contradicting Lemma \ref{lem:correct}.

Now if $\gsubst=\subst$, then the lemma is trivially true with $\subst^*=\varnothing$.

For the case $\gsubst\supset\subst$, consider any entry $x\mapsto(\bexpa,\bexpb)\in\gsubst-\subst$.
We only have to consider cases for which $x\in\gexp$ because all those bindings with $x\notin\gexp$ do not contribute to a more general expression \gexp, and the lemma becomes again true with $\subst^*=\varnothing$.
By contrast, if $x\in\gexp$, then we know that $\bexpa$ must be equal to $\bexpb$ because otherwise the difference between \beL\ and \beR\ would have been picked up by rule \textsc{New} and be part of \subst. Now we can set $\subst^*=\set{x\mapsto \bexpa\ |\ x\mapsto(\bexpa,\bexpb)\in\gsubst-\subst}$, which just has the desired property of $\subst^*(\gexp)=e$.
\hfill$\square$

\end{document}